\newcommand{\abs}[1]{\left\lvert#1\right\rvert}
\theoremstyle{plain}
\newtheorem{thm}{Theorem}[section]
\newtheorem{lem}[thm]{Lemma}
\newtheorem{prop}[thm]{Proposition}
\theoremstyle{definition}
\newtheorem{defn}[thm]{Definition}
\newtheorem{rem}[thm]{Remark}
\newtheorem{exmp}[thm]{Example}
\newcommand{\ee}{{\mathrm e}}
\newcommand{\ii}{{\mathrm i}}
\newcommand{\dd}{{\mathrm d}}
\DeclareMathOperator{\im}{im}
\DeclareMathOperator{\tr}{tr}
\DeclareMathOperator{\rk}{rk}
\DeclareMathOperator{\ran}{ran}
\let\Im\relax
\DeclareMathOperator{\Im}{Im}
\newcommand{\bvec}[1]{\vec{#1}}
\renewcommand{\bvec}[1]{\boldsymbol{#1}}
\DeclarePairedDelimiter\ket{\lvert}{\rangle}
\DeclarePairedDelimiterX\braket[2]{\langle}{\rangle}{#1 , #2}
\begin{document}
	
\title{Topology in shallow-water waves:  \\ a violation of bulk-edge correspondence}

\author{Gian Michele Graf, Hansueli Jud and Cl\'ement Tauber\thanks{tauberc@phys.ethz.ch} \\ \bigskip {\small Institute for Theoretical Physics, ETH Z\"{u}rich, Wolfgang-Pauli-Str. 27, 8093 Z\"{u}rich, Switzerland}}

\date{\today}

\maketitle

\vspace{-0.5cm}
\begin{abstract}
	We study the two-dimensional rotating shallow-water model describing Earth's oceanic layers. It is formally analogue to a Schr\"odinger equation where the tools from topological insulators are relevant. Once regularized at small scale by an odd-viscous term, such a model has a well-defined bulk topological index. However, in presence of a sharp boundary, the number of edge modes depends on the boundary condition, showing an explicit violation of the bulk-edge correspondence. We study a continuous family of boundary conditions with a rich phase diagram, and explain the origin of this mismatch. Our approach relies on scattering theory and Levinson's theorem. The latter does not apply at infinite momentum because of the analytic structure of the scattering amplitude there, ultimately responsible for the violation.
\end{abstract}

\section{Introduction}

Concepts developed to describe topological insulators can be applied far beyond their original context of the quantum Hall effect, or more generally, that of solid state physics. They are actually relevant to classical wave phenomena occurring in various fields such as optics \cite{RaghuHaldane08,DeNittisLein17}, acoustics \cite{Perietal19} or even fluid dynamics \cite{DelplaceMarstonVenaille17}, as soon as the partial differential equations ruling the system are formally equivalent to a Schr\"odinger equation and to the extent that they both engender analogous geometric structures. 

A central concept in topological insulators is the bulk-edge correspondence \cite{Halperin82}. It states that, when an infinite and gapped system -- the bulk -- admits a topological index, the latter predicts the number of chiral modes appearing at the edge of a sample with a boundary. More precisely such modes are counted by a topological edge index, which coincides with the bulk one. The correspondence was established in a wide range of settings, starting with \cite{Halperin82,SchulzBaldesKellendonkRichter00} at different levels of rigour and followed by \cite{GrafPorta13,Avilaetal13,ProdanSchulzBaldes16,EssinGurarie11} and others by including refinements, such as due to symmetries, and by using various methods.

In this paper we study a quasi two-dimensional, rotating and classical fluid called the shallow-water model. Such a model describes certain oceanic and atmospheric layers on Earth, and explains the presence of a large structure, called Kelvin equatorial wave, propagating near the equator in the Pacific ocean. Such a propagation is always from West to East with a remarkable stability. Ref.~\cite{DelplaceMarstonVenaille17} first provided an interpretation of the Kelvin wave as a topological mode at the interface between the two hemispheres. By changing sign at the equator, the Coriolis force is analogue to a magnetic field in the quantum Hall effect as already noticed earlier in~\cite{FroehlichStuderThiran95}. Later, it was also realized that each hemisphere has a well-defined bulk index -- Chern number -- after adding an odd-viscous term which provides a small-scale regularization for this continuous model \cite{Souslovetal19,TauberDelplaceVenaille19}.

The most striking feature of this model is a violation of the bulk-edge correspondence: The number of edge modes for a sample with a sharp boundary, like a coast, depends on the boundary condition and hence does not always match with the associated bulk index. Such a mismatch was conjectured in \cite{TauberDelplaceVenaille19bis} for some boundary condition. The main result of this paper is to prove it for a continuous family of conditions and to explain the cause of such a violation. 

Bulk-edge correspondence was proved in a very general setting for two-dimensional discrete systems with translation invariance \cite{GrafPorta13}. One approach relies on scattering theory, that studies how plane waves that come from the bulk are reflected at the boundary. The associated scattering amplitude encodes for the number of edge modes merging with a band edge in accordance with a variant of Levinson's theorem. Ultimately, it relies on the analytic continuation of the Bloch variety. The main difference here is that our model is continuous, so that the momentum as well as the Hamiltonian are not bounded. Even though the bulk picture is properly compactified, the analytic structure of the scattering amplitude at infinite momentum is exceptional and leads to two alternatives to Levinson's scenario. Both fail in counting the asymptotic number of edge modes, which clarifies the anomaly in the bulk-edge correspondence. To our knowledge, this is one of the rare cases of Levinson's theorem where the scattering amplitude is not trivial at infinity \cite{KellendonkPankrashkinRichard11}.

It is finally worthwhile to mention another approach to deal with a topological index for continuous models. A way to regularize the edge problem is to consider a smooth boundary or interface potential, gluing two samples with different bulk indices. The bulk-edge correspondence is usually satisfied in that case \cite{Feffermanetal16,Bal19,Drouot19}. 

The paper is organized as follows. Sect.~\ref{sec:main} describes the model from its physical origin to its topological bulk and edge features, and states the main result in terms of scattering theory. Sect.~\ref{sec:proofs} is devoted to the proofs and also provides further details about the mismatch. The appendices generalize some of the results beyond the particular choices that are made in the main text.

\section{Shallow-water model and its topology \label{sec:main} }

\subsection{The linearized, rotating and odd-viscous shallow-water model}

The shallow-water model describes a thin layer of fluid between a flat bottom and a free surface  \cite{Vallis17}. It has three degrees of freedom: the vertical height of the surface $\eta(x,y,t)$ and a horizontal two-component velocity field $u(x,y,t)$, $v(x,y,t)$. They are ruled by a system of partial differential equations:
\begin{subequations}\label{eq:ShallowWaterModel}
	\begin{align}
		\label{eq:ShallowWaterModel_1}\partial_t \eta &= - \partial_x u - \partial_y v,\\
		\label{eq:ShallowWaterModel_2}\partial_t u &=-\partial_x \eta- \left(f +\nu  \nabla^2 \right)v, \\
		\label{eq:ShallowWaterModel_3}\partial_t v &=-\partial_y \eta+ \left(f +\nu  \nabla^2 \right)u. 
	\end{align}
\end{subequations}
This model is derived from the three-dimensional Euler equations for an incompressible and homogeneous fluid. Equation \eqref{eq:ShallowWaterModel_1} comes from mass conservation, whereas \eqref{eq:ShallowWaterModel_2} and \eqref{eq:ShallowWaterModel_3} come from horizontal momentum conservation. The main assumption is that the typical wavelength of the fluid is much larger than its height. This allows to neglect vertical acceleration and implies hydrostatic pressure, leading to the $-\partial_x \eta$ and $-\partial_y \eta$ terms (gravity $g$ has been rescaled to 1). Moreover $u$ and $v$ are the depth-averaged horizontal components of the three dimensional velocity field. The system above is then obtained by linearizing the problem by looking at small fluctuations around a layer of fluid at rest.

When the fluid layer is the ocean, one takes into account Earth's rotation through the Coriolis acceleration $f (v,-u)$ where $f$ depends on the latitude.  It is positive (resp. negative) in the northern (resp. southern) hemisphere, and vanishes at the equator. Finally, the term $\nu \nabla^2 (v,-u)$ is called odd viscosity and comes from the antisymmetric part of the viscosity tensor, meant as a map between symmetric tensors of rank 2 \cite{Avron98}. This exotic term is non dissipative and allowed in dimension two if time reversal symmetry is broken. In the context of geophysical fluids this effect is not manifest, but it appears in some active liquids \cite{BanerjeeSouslovAbanovVitelli17,Souslovetal19}, and also in the quantum Hall effect, where it is called Hall viscosity. In the following $\nu$ is some positive and arbitrarily small parameter that regularizes the problem at small scales \cite{TauberDelplaceVenaille19,TauberDelplaceVenaille19bis}.

\subsection{Topology in the bulk \label{sec:bulk}}

The topology of shallow-water waves is revealed by studying their internal structure \cite{DelplaceMarstonVenaille17,Souslovetal19,TauberDelplaceVenaille19}. We approximate some local region on Earth by its tangent plane, so that $(x,y) \in \mathbb R^2$ and $f>0$ is a constant. We also require $\nu < 1/4f$ so as to streamline some computations below. The previous system \eqref{eq:ShallowWaterModel} is analogous to a Schr\"odinger equation with
\begin{equation}\label{eq:ShallowWater_Schrodinger} 
	\ii \partial_t \psi = \mathcal H \psi, \qquad \psi = \begin{pmatrix}
		\eta \\ u\\ v
	\end{pmatrix}, \qquad \mathcal H = \begin{pmatrix}
		0 & p_x & p_y \\ p_x & 0 & -\ii (f-\nu  \bvec{p}^2) \\ p_y & \ii (f-\nu  \bvec{p}^2) & 0
	\end{pmatrix},
\end{equation}
where $p_x = -\ii \partial_x$, $p_y = -\ii \partial_y$  and $\bvec{p}^2 = p_x^2+p_y^2$. $\mathcal H$ is a self-adjoint operator on $L^2(\mathbb R^2)^{\otimes 3}$ with domains $H^1(\mathbb R^2)\oplus H^2(\mathbb R^2) \oplus H^2(\mathbb R^2)$. It is also translation invariant so that the stationary solutions are given by the normal modes $\psi := \widehat \psi(k_x,k_y,\omega) \, \ee^{\ii(k_x x + k_y y-\omega t )}$ with momentum $\bvec{k} =(k_x,k_y) \in \mathbb R^2$ and frequency $\omega \in \mathbb R$, leading to the eigenvalue problem
\begin{equation}\label{eq:ShallowWater_bulk}
 	H \widehat \psi = \omega \widehat \psi, \qquad \widehat \psi = \begin{pmatrix}
		\hat \eta \\ \hat u\\ \hat v
	\end{pmatrix} \qquad  H(\bvec{k}) = \begin{pmatrix}
		0 & k_x & k_y \\ k_x & 0 & -\ii (f-\nu  \bvec{k}^2) \\ k_y & \ii (f-\nu  \bvec{k}^2) & 0
	\end{pmatrix},
\end{equation}
with $\bvec{k}^2=k_x^2+k_y^2$ and $H(\bvec{k})$ a Hermitian matrix. The system \eqref{eq:ShallowWater_bulk} admits three frequency bands:
\begin{equation}\label{eq:bulk_bands}
	\omega_\pm(\bvec{k}) = \pm \sqrt{\bvec{k}^2 + \left(f-\nu \bvec{k}^2\right)^2}, \qquad \omega_0(\bvec{k}) = 0,
\end{equation}
which are separated by two gaps of size $f$. In contrast to models that are periodic with respect to a lattice, such as (discrete) tight binding models, here momentum space is unbounded. As we shall see shortly, it is however appropriate to compactify it. Each band may then carry a non-trivial topology, characterized by a Chern number. The latter encodes the obstruction of finding a global eigensection that is non-vanishing and regular for all $\bvec{k} \in \mathbb R^2$ \cite{TauberDelplaceVenaille19}.

The Hamiltonian \eqref{eq:ShallowWater_bulk} can be rewritten as $H = \vec{d} \cdot\vec{S}$ where $\vec d = (k_x,k_y,f-\nu \bvec{k}^2)$ and
\begin{equation}
	S_1 = \begin{pmatrix}
		0 & 1& 0 \\ 1& 0 & 0\\ 0&0&0
	\end{pmatrix}, \qquad S_2 = \begin{pmatrix}
		0 & 0& 1 \\ 0& 0 & 0\\ 1&0&0
	\end{pmatrix}, \qquad S_3 = \begin{pmatrix}
		0 & 0& 0 \\ 0& 0 & - \ii\\ 0&\ii&0
	\end{pmatrix},
\end{equation}
is an irreducible spin 1 representation. $H$ shares its eigenprojection with the flat Hamiltonian $ H'=\vec e \cdot \vec S$ where $\vec e = \vec d/|\vec d|$. It reads 
\begin{equation}
	P_\pm = \dfrac{1}{2}\big( (\vec e \cdot \vec S)^2 \pm \vec e \cdot \vec S \big), \qquad P_0 = \mathds 1 - (\vec e \cdot \vec S)^2.
\end{equation}
We note that $\vec{e}=\vec{e}(\bvec{k})$ is convergent for $k\to \infty$, and so are $P_\pm$ and $P_0$; in fact $\vec{e}\to(0,0,-1)$ by $\nu>0$. Consequently, the Chern number 
\begin{equation}\label{eq:chern_tr}
	C(P) = \dfrac{1}{2 \pi \ii} \int_{\mathbb R^2} \dd k_x \dd k_y \, \tr( P [\partial_{k_x} P, \, \partial_{k_y}P])
\end{equation}
is a well-defined topological invariant. Indeed the momentum plane can be compactified to the 2-sphere $S^2$, so that the Berry curvature on the r.h.s is eventually computed on a closed manifold. If $\nu =0$ instead, $\vec{e}$ has a circle worth of accumulation points as $k\to \infty$. The r.h.s. may still be  finite but it would not be a Chern number.

\begin{prop}\label{prop:chern_numbers}
	Let $M$ be a compact two-dimensional manifold without boundary, $\vec e : M \to S^2$ and $H= \vec e \cdot \vec S$ with $\vec S$ an irreducible spin $s$ representation. Let $P_m$ be the eigenprojection of $H$ for the eigenvalue $m\in\{-s, -s+1, \ldots, s-1, s\}$. If the $2s+1$ bands of $H$ are non-degenerate over $M$ then one has
	\begin{equation}\label{eq:chern_volume}
		C(P_m) =   \dfrac{m}{2\pi} \int_M (\vec e)^*w, \qquad (\vec e)^*w = \vec e \cdot ( \partial_1 \vec e \wedge \partial_2 \vec e)\, \dd x_1 \dd x_2\,,
	\end{equation}
	where $w$ is the volume form on $S^2$. In particular if $\vec e$ wraps exactly once around the sphere then $C(P_m)=2 m$.
\end{prop}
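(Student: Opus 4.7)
The plan is to exploit naturality of the Chern form under pullback: the projection $P_m$ depends on the point of $M$ only through $\vec e(x)\in S^2$, so we may first compute the relevant $2$-form on $S^2$ (for the ``universal'' spin Hamiltonian $\tilde H(\hat n)=\hat n\cdot\vec S$) and then pull back by $\vec e$. Concretely, let $\tilde P_m(\hat n)$ denote the eigenprojection of $\tilde H(\hat n)$ for the eigenvalue $m$; by construction $P_m = \tilde P_m\circ \vec e$, so
\begin{equation*}
\tr\!\big(P_m[\partial_1 P_m,\partial_2 P_m]\big)\,\dd x_1\,\dd x_2 \;=\; (\vec e)^{*}\tilde\omega_m,\qquad \tilde\omega_m := \tr\!\big(\tilde P_m\,\dd\tilde P_m\wedge \dd \tilde P_m\big).
\end{equation*}

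Next I would exploit $SO(3)$-equivariance. For $R\in SO(3)$ let $U_R$ be the unitary implementing $R$ in the spin-$s$ representation, so $U_R^{\dagger}(\vec S\cdot\hat n)U_R = \vec S\cdot R^{-1}\hat n$ and thus $\tilde P_m(R\hat n)=U_R\tilde P_m(\hat n)U_R^{\dagger}$. The trace kills the unitary conjugation, hence $\tilde\omega_m$ is an $SO(3)$-invariant $2$-form on $S^2$. Since the space of such forms is one-dimensional, one has $\tilde\omega_m = c_m\,w$ for a constant $c_m$ that depends only on $m$ and $s$.

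It remains to identify the constant $c_m$. I would do this by evaluating $\tilde\omega_m$ at a single convenient point, say the north pole $\hat n=(0,0,1)$, where $\tilde P_m$ is the rank-$1$ projection onto the eigenvector $|m\rangle$ of $S_3$. Writing $\hat n=(\sin\theta\cos\varphi,\sin\theta\sin\varphi,\cos\theta)$, the infinitesimal rotation generated by $\partial_\theta$ and $\partial_\varphi$ acts on $\tilde P_m$ through $S_1,S_2$, and the standard action of the raising/lowering operators $S_{\pm}$ on $|m\rangle$ gives
\begin{equation*}
\tr\!\big(\tilde P_m[\partial_\theta \tilde P_m,\partial_\varphi\tilde P_m]\big)\Big|_{\theta=0} \;=\; \ii\, m\,\sin\theta\Big|_{\theta=0}\cdot(\text{leading order}),
\end{equation*}
and carrying out this short calculation (or, equivalently, invoking the textbook spin-$s$ monopole Chern-number result $2m$) yields $c_m=\ii m$. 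Equivalently, one may normalize by integrating over $M=S^2$ with $\vec e=\mathrm{id}$: $\int_{S^2} w = 4\pi$ together with $C(\tilde P_m)=2m$ forces $c_m = \ii m$.

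Pulling back and dividing by $2\pi\ii$ produces the asserted identity
\begin{equation*}
C(P_m)\;=\;\frac{1}{2\pi\ii}\int_M (\vec e)^{*}(\ii m\,w)\;=\;\frac{m}{2\pi}\int_M (\vec e)^{*}w,
\end{equation*}
and, when $\vec e$ has degree $1$, the integral is $4\pi$, giving $C(P_m)=2m$. The only non-routine step is the normalization $c_m=\ii m$; everything else is naturality and invariance. The delicate point to watch is that non-degeneracy of the bands on all of $M$ is needed to guarantee that $\tilde P_m$ is smooth along the image of $\vec e$, so that the pullback $(\vec e)^{*}\tilde\omega_m$ is globally defined and Stokes-integrable.
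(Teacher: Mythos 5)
Your proof is correct in outline but follows a genuinely different route from the paper's. The paper proves the claim by induction on $s$ in half-integer steps: the base case $s=1/2$ is computed explicitly from Pauli-matrix identities, and the inductive step uses the Clebsch--Gordan decomposition $\mathcal{D}_{s+\frac12}\oplus\mathcal{D}_{s-\frac12}=\mathcal{D}_s\otimes\mathcal{D}_{\frac12}$ together with additivity of Chern numbers for tensor products of line bundles, which yields the recursion $c_{s+\frac12,m}=2m$. Your argument instead works at the level of the curvature $2$-form: naturality under pullback reduces everything to the universal Hamiltonian $\hat n\cdot\vec S$ on $S^2$, $SO(3)$-equivariance forces $\tilde\omega_m=c_m\,w$, and a single pointwise evaluation fixes $c_m$. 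This is arguably cleaner and gives a slightly stronger, pointwise statement (the Berry curvature is literally $\ii m$ times the pulled-back volume form), whereas the paper's induction only produces the integrated Chern numbers; on the other hand the paper's route avoids any curvature computation beyond $s=1/2$. Two cautions on your normalization step. First, the fallback of ``invoking the textbook spin-$s$ monopole result $C=2m$'' is circular here, since that result is exactly the $M=S^2$, $\vec e=\mathrm{id}$ case of the proposition; you must actually do the pointwise computation. Second, spherical coordinates $(\theta,\varphi)$ are degenerate at $\theta=0$, so the evaluation should be done in regular coordinates at the pole, e.g.\ via first-order perturbation theory: with $\tilde P_m=\ket{m}\bra{m}+\sum_{m'\neq m}(m-m')^{-1}\big(\ket{m'}\bra{m'}\delta\hat n\cdot\vec S\ket{m}\bra{m}+\mathrm{h.c.}\big)$ one finds
\begin{equation*}
\tr\big(\tilde P_m[\partial_1\tilde P_m,\partial_2\tilde P_m]\big)=\frac{1}{2\ii}\Big(\big(s(s+1)-m(m+1)\big)-\big(s(s+1)-m(m-1)\big)\Big)=\ii m\,,
\end{equation*}
which is the constant $c_m=\ii m$ you need; with that supplied, the rest of your argument goes through.
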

The proof is given in App.~\ref{app:chern_spin}. In the case of \eqref{eq:ShallowWater_bulk} the map $\vec e : \mathbb R^2 \cong S^2 \mapsto S^2$ wraps exactly once around the sphere when $f$ and $\nu$ have the same sign. Given that $s=1$ we infer $C_\pm =\pm 2$ and $C_0=0$. In the rest of the article we will assume $f$, $\nu>0$.

\begin{rem}
In absence of regularization ($\nu=0$) the image of $\vec e$ has boundaries, as it covers only half the sphere. Therefore the r.h.s of \eqref{eq:chern_volume} happens to be an integer, $\pm 1$, but not a topological invariant. Indeed, $\vec e$ can be continuously deformed to the constant map $\vec e_0 = (0,0,1)$ that has zero Chern number. The integer value obtained for $\nu = 0$ is due to the normalization, especially to $s=1$. The analogue with a spin $s=1/2$ (Dirac Hamiltonian) leads to $\pm 1/2$ without regularization.
\end{rem}

\subsection{Edge modes \label{sec:edge}}

The non-trivial topology in the bulk should be manifest by the presence of edge modes in a sample with a boundary, according to the bulk-edge correspondence \cite{Halperin82,Hatsugai93}. We thus study the shallow-water problem in the upper half-plane $(x,y) \in \mathbb R \times \mathbb R_+$ with a horizontal boundary at $y=0$, where we impose the following condition:
\begin{equation}\label{eq:boundary_condition}
	v|_{y=0} = 0, \qquad (\partial_x u + a \partial_y v)|_{y=0} = 0,
\end{equation}
for some $a \in \mathbb R$. The first constraint means that the velocity at the boundary has no normal component. The second one is less easy to interpret. It was studied in \cite{TauberDelplaceVenaille19bis} for $a=\pm 1$. When $a=1$ the constraint implies by \eqref{eq:ShallowWaterModel_1} that $\eta$ is fixed to a constant value at the boundary.
For $a=-1$ the (odd-viscous) stress tensor $\sigma_{xy} = \nu (-\partial_x u + \partial_y v)$ vanishes, so that there is no shear at the boundary. Here we shall consider the entire family of conditions for $a \in \mathbb R$, in order to study the transition between different regimes. 

\begin{prop}
	For any $a\in\mathbb{R}$, the Hamiltonian $\mathcal H$ in \eqref{eq:ShallowWater_Schrodinger} is self-adjoint on $L^2\left(\mathbb{R}\times\mathbb{R}^+\right)^{\otimes 3}$ when equipped with boundary conditions \eqref{eq:boundary_condition}; more precisely its domain is the subspace of the Sobolev space $H^1\oplus H^2 \oplus H^2$ ($H^k=H^k\left(\mathbb{R}\times\mathbb{R}^+\right)$) defined by them.
\end{prop}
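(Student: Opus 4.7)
The natural plan is to exploit translation invariance in $x$ via a partial Fourier transform $\mathcal{F}_x$, under which $L^2(\mathbb{R}\times\mathbb{R}^+)^{\otimes 3}\simeq\int^{\oplus}_{\mathbb R}\dd k_x\, L^2(\mathbb R^+)^{\otimes 3}$ and $\mathcal H\simeq\int^{\oplus} \dd k_x\, H(k_x)$. The fibre $H(k_x)$ is obtained from (\ref{eq:ShallowWater_Schrodinger}) by substituting $p_x\to k_x$ (keeping $p_y=-\ii\partial_y$), and the boundary conditions (\ref{eq:boundary_condition}) become $v(0)=0$ and $\ii k_x u(0)+a(\partial_y v)(0)=0$. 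By the standard direct-integral self-adjointness criterion it then suffices to prove that $H(k_x)$ is self-adjoint on $D(k_x)=\{(\eta,u,v)\in H^1\oplus H^2\oplus H^2(\mathbb R^+) : \text{BCs hold}\}$ for a.e.\ $k_x$.

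Symmetry of $H(k_x)$ is a routine integration-by-parts calculation. Writing $\psi_j=(\eta_j,u_j,v_j)$ for $j=1,2$, one finds
\begin{equation*}
\langle\psi_1,H(k_x)\psi_2\rangle-\langle H(k_x)\psi_1,\psi_2\rangle=\bigl[\ii(\eta_1^*v_2+v_1^*\eta_2)+\ii\nu(u_1^*\partial_y v_2-\partial_y u_1^*\,v_2-v_1^*\,\partial_y u_2+\partial_y v_1^*\,u_2)\bigr]_{y=0}.
\end{equation*}
Imposing $v_j(0)=0$ annihilates every term except $\ii\nu[u_1^*(0)\partial_y v_2(0)+\partial_y v_1^*(0)u_2(0)]$; substituting $a\partial_y v_j(0)=-\ii k_x u_j(0)$ from the second BC makes this vanish as well for $a\neq 0$. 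When $a=0$ the BC collapses to $u(0)=0$ for $k_x\neq 0$ and the cancellation is immediate; the isolated fibre $k_x=0$ carries no measure.

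The substantive step is self-adjointness proper. I would cast $(H(k_x)-z)\psi=0$ for $z\in\mathbb C\setminus\mathbb R$ as a constant-coefficient ODE system in $y$. Characteristic exponents $\lambda$ (for which $\psi(y)=e^{\lambda y}\mathbf{v}$ solves the system) satisfy, by analytically continuing (\ref{eq:bulk_bands}) in the $y$-momentum, $z^2=(k_x^2-\lambda^2)+(f-\nu k_x^2+\nu\lambda^2)^2$ — a quadratic in $\lambda^2$ with four complex roots. For an appropriate choice of $z$ with $\Re z\neq 0$ and $\Im z\neq 0$, the regularization $\nu>0$ ensures that none of the $\lambda$'s is purely imaginary. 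Row~1 of the system algebraically eliminates one degree of freedom, leaving a four-dimensional phase space, and exactly two of the $\lambda$'s satisfy $\Re\lambda<0$, spanning the decaying subspace at $y=+\infty$. The two boundary conditions at $y=0$ then cut out a $2\times 2$ matching system on this subspace; its invertibility — the Shapiro--Lopatinskii covering condition — yields unique $L^2$-solvability of $(H(k_x)-z)\psi=\chi$ for every $\chi\in L^2(\mathbb R^+)^{\otimes 3}$. Hence $\ran(H(k_x)-z)=L^2$, and combined with symmetry $H(k_x)$ is self-adjoint. Direct-integral reassembly concludes.

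The main technical obstacle is the uniform Shapiro--Lopatinskii verification: showing the $2\times 2$ matching determinant is nonzero for every $(k_x,a)$ at the chosen $z$. The $k_x$-dependence of the second BC and the degenerate limits $a=0$ and $k_x=0$ force a case analysis, but since the characteristic equation is explicit, the verification is essentially algebraic. Once each fibre is self-adjoint, elliptic regularity of the $\nu\partial_y^2$ contribution identifies $D(\mathcal H)$ with the full Sobolev subspace stated.
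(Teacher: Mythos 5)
Your route is genuinely different from the paper's. The paper (App.~\ref{app:selfadjoint_bc}) also Fourier-transforms along $x$ and computes the same boundary form by partial integration, but it never solves a boundary-value problem: it packages the boundary data into $\Psi=(\psi,\psi')\in\mathbb{C}^6$ and characterizes self-adjoint realizations as the subspaces $M$ with $\Omega M=M^\perp$ (Lemma~\ref{lem:self_adj_subspaces}), i.e.\ as maximal isotropic subspaces of the boundary form. Maximality is then a pure dimension count (Lemma~\ref{lem:equivalent_boundary_cond}, Prop.~\ref{prop:boundary_cond}): a $2\times 6$ matrix $A$ of rank $2$ with $AN=0$ and $A\hat{\Omega}A^*=0$ automatically cuts out a Lagrangian subspace, and the condition \eqref{eq:boundary_condition} is verified against these two finite-dimensional identities. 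That buys a classification of \emph{all} self-adjoint boundary conditions (reused, e.g., for Dirichlet) with no ODE analysis; your resolvent-range argument is more hands-on and treats only the given family.

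The gap in your write-up is exactly the step you defer: the non-vanishing of the $2\times 2$ matching determinant is where maximality lives, and without it you have proved only symmetry. You should not, however, attempt the advertised ``essentially algebraic'' case analysis in $(k_x,a)$ --- it is unnecessary, because the Lopatinskii condition follows from the symmetry you already established. Indeed, if the determinant vanished at some $k_x$ for your non-real $z$, the corresponding combination of the two decaying characteristic solutions would be a nontrivial, exponentially decaying (hence Sobolev-class) solution of $(H(k_x)-z)\psi=0$ satisfying both boundary conditions, i.e.\ an eigenfunction of the symmetric operator $H(k_x)$ with non-real eigenvalue --- a contradiction. Surjectivity of $H(k_x)-z$ then follows by correcting a particular decaying solution (e.g.\ the whole-line resolvent applied to $\chi$ extended by zero) with homogeneous decaying solutions; the same must be recorded for $\bar z$. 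Two further points are left implicit in your argument (as, to be fair, they are in the paper's): the regularity needed to define boundary traces on the adjoint domain, and the $k_x$-uniform elliptic estimate identifying the direct-integral domain with the stated subspace of $H^1\oplus H^2\oplus H^2$.
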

The proof is provided in App.~\ref{app:selfadjoint_bc} where we classify all self-adjoint boundary conditions. A rough count would suggest that a second order system with three unknowns, such as \eqref{eq:ShallowWaterModel}, would require three boundary conditions so as to ensure self-adjointness. However \eqref{eq:ShallowWaterModel_1} is first order, lowering the count by one.

The problem is translation invariant in the $x$-direction, so that the stationary solutions are given by the normal modes $\psi = \widetilde \psi \, \ee^{\ii(k_x x-\omega t)}$ with momentum $k_x \in \mathbb R$, frequency $\omega \in \mathbb R$ and $\widetilde \psi (y; k_x, \omega) =: (\tilde \eta, \tilde u, \tilde v)$. The system \eqref{eq:ShallowWaterModel} becomes a system of ordinary differential equations
\begin{subequations}\label{eq:ShallowWater_edge}
	\begin{align}
	\label{eq:ShallowWater_edge_1}\ii \omega \tilde \eta &= \ii k_x \tilde u + \partial_y \tilde v,\\
	\label{eq:ShallowWater_edge_2}\ii \omega \tilde u &=\ii k_x \tilde \eta + \left(f -\nu  k_x^2 \right)\tilde v  + \nu \partial_{yy} \tilde v,\\
	\label{eq:ShallowWater_edge_3}\ii \omega \tilde  v &=\partial_y \tilde \eta - \left(f -\nu  k_x^2 \right) \tilde u  - \nu \partial_{yy} \tilde u,
	\end{align}
\end{subequations}
that is exactly solvable for each value of the parameters $k_x,\,\omega$ and $a$.  

\begin{figure}[htb]
	\centering
	\includegraphics[scale=0.5]{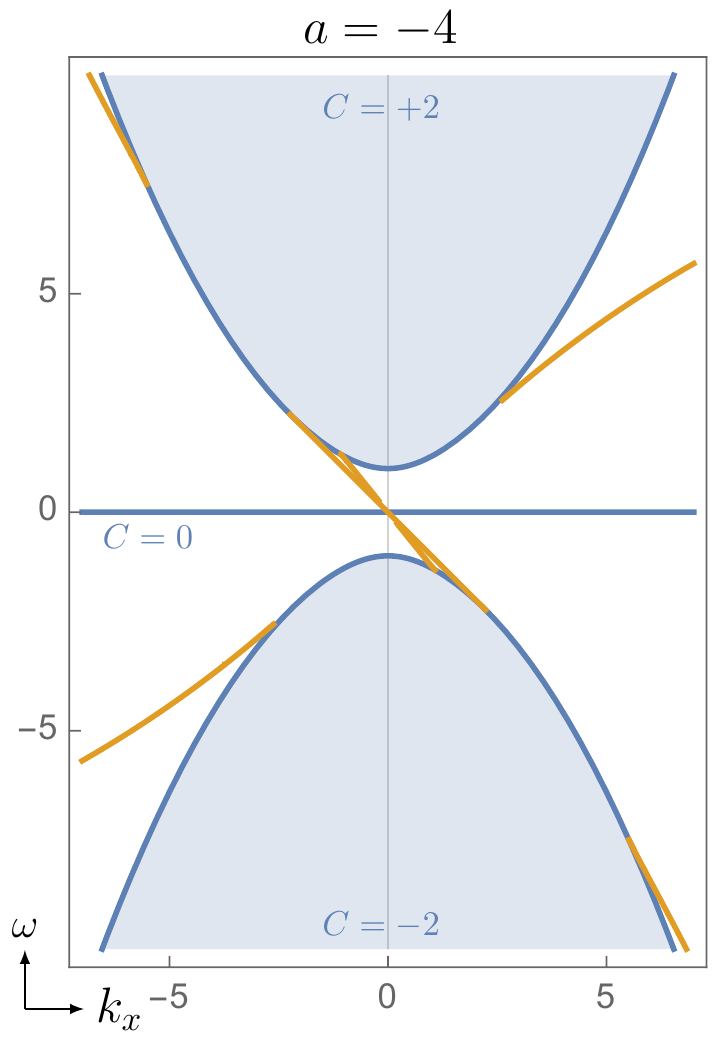}\hspace{0.25cm}
	\includegraphics[scale=0.5]{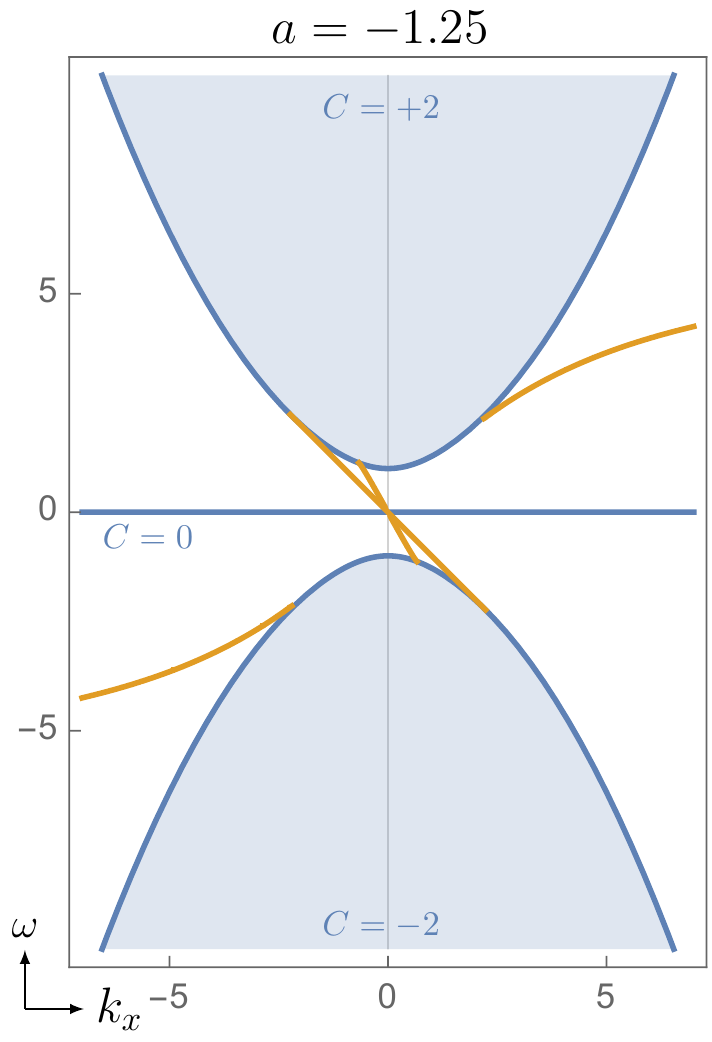}\hspace{0.25cm}
	\includegraphics[scale=0.5]{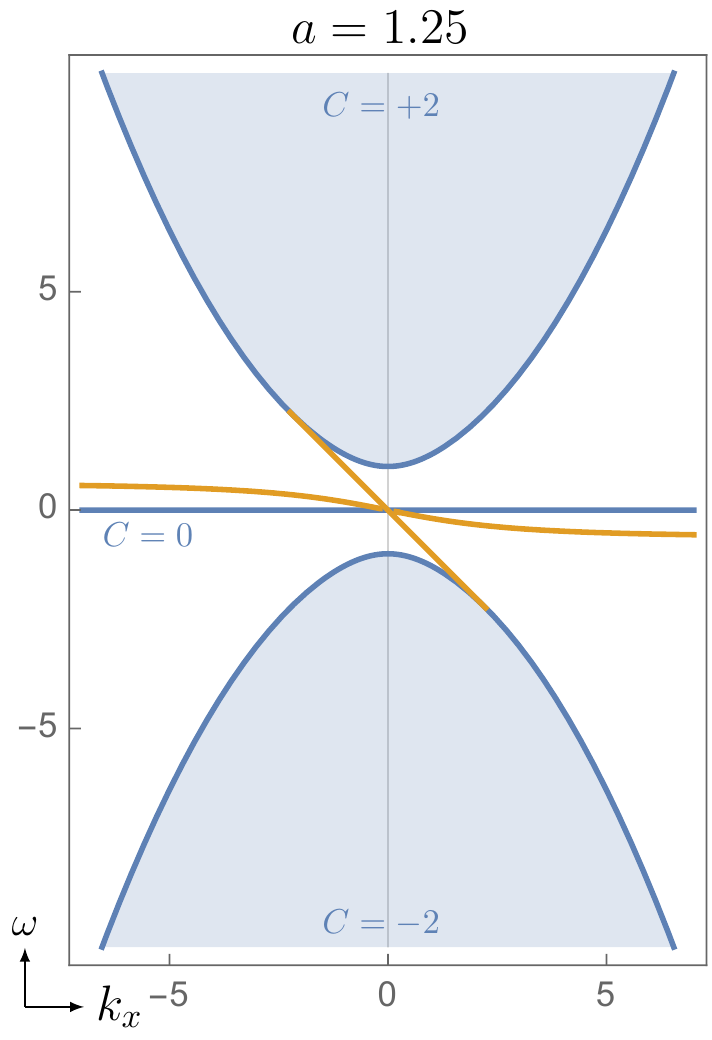}\hspace{0.25cm}
	\includegraphics[scale=0.5]{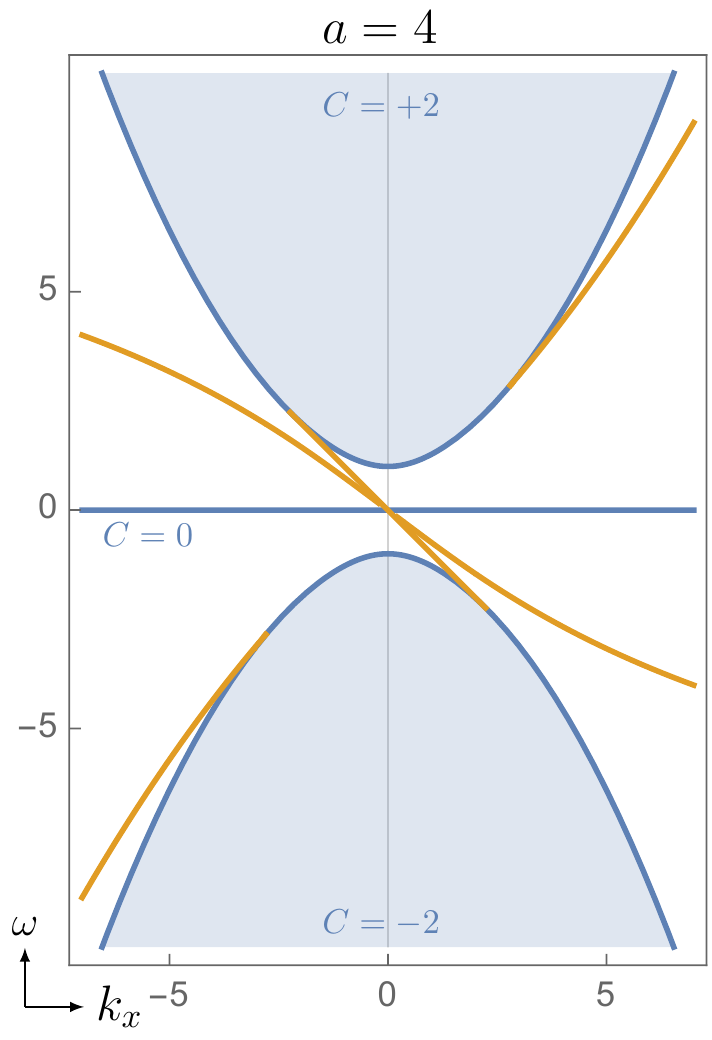}	
	\caption{Spectrum of the edge problem \eqref{eq:ShallowWater_edge} with boundary condition \eqref{eq:boundary_condition} for $f=1$, $\nu=0.2$ and four values of $a$. Shaded blue regions correspond to delocalized solutions extending over the half-plane. Yellow branches correspond to edge modes, localized near the boundary.\label{fig:edge_spectrum}}
\end{figure}
In Fig.~\ref{fig:edge_spectrum} the edge spectrum is plotted for different values of $a$, corresponding to the existence of solutions to \eqref{eq:ShallowWater_edge} that satisfy \eqref{eq:boundary_condition} and stay bounded when $y \rightarrow \infty$. The nature of the solutions depends on $k_x$ and $\omega$ and is of one of two types. For $|\omega| \geq \omega_+(k_x,0)$ or $\omega=0$ the solutions (in blue) are delocalized in the upper half plane. The same blue region also corresponds by the way to bounded solutions in the whole plane, and is nothing but the projection of the surface generated by \eqref{eq:bulk_bands}. In the gaps between them, the yellow curves in the spectrum are edge modes that decay exponentially when $y \rightarrow \infty$. 

What is striking here is that the number of such modes changes with the choice of boundary condition. As we shall see this is in contradiction with the bulk-edge correspondence. Moreover in each case there are edge modes that saturate at $\omega = C^{st}$ as $|k_x|\rightarrow \infty$. Such modes are perfectly allowed when $k_x$ is unbounded and  have the physical interpretation of inertial waves in classical fluids \cite{Iga95}. Because of such branches we have to specify a consistent way to count edge modes.

\begin{defn}
	The number $n_\mathrm{b}$ of edge modes below a bulk band is the signed number of edge mode branches emerging $(+)$ or disappearing $(-)$ at the lower band limit, as $k_x$ increases. The number $n_\mathrm{a}$ of edge modes above a band  is counted likewise up to a global sign change.   
\end{defn}

In the following, we focus on the upper band only since the lower one is its symmetric and the middle one is trivial. By taking the diagrams of Fig.~\ref{fig:edge_spectrum} in the order of increasing $a$ one reads off $n_\mathrm{b}= 2,\, 3, \, 1,\,2$; moreover $n_\mathrm{a}=0$ in all cases, because the upper band has no upper edge. We defer any objections to this count and invoke bulk-edge correspondence, in the form of the Hatsugai relation. That principle, if accepted in the present context, would state
\begin{equation}\label{eq:Hatsugai_rel}
	C_+= n_\mathrm{b}- n_\mathrm{a}\,;
\end{equation}
yet it is violated, at least for some $a$, because only the l.h.s. is independent of it.

\begin{prop}\label{prop:nb_finite_kx}
The phase diagram of the total number $n_\mathrm{b}$ of edge modes below the upper band for $k_x$ in an arbitrary large but finite interval reads:
\begin{center}
	\begin{tikzpicture}
	\draw[-latex] (-4,0) -- (4,0)node[below]{$a$};
	\draw (0,0.2) -- (0,-0.2)node[below]{\small $0$};
	\draw (2,0.2) -- (2,-0.2)node[below]{\small $\sqrt 2$};
	\draw (-2,0.2) -- (-2,-0.2)node[below]{\small $-\sqrt 2$};
	\draw (-4.5,0.5) node{$n_\mathrm{b} = $};
	\draw (-3,0.5) node{$2$};
	\draw (-1,0.5) node{$3$};
	\draw (1,0.5) node{$1$};
	\draw (3,0.5) node{$2$};
 	\end{tikzpicture}
\end{center}
At the transition $a=\sqrt 2$, an edge mode branch existing for $a>\sqrt 2$ is repelled to $k_x = +\infty$ and vanishes from the spectrum for $a<\sqrt 2$, and likewise at $a =-\sqrt{2}$ and $k_x = - \infty$.
\end{prop}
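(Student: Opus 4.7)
My approach is to reduce the edge problem to an explicit algebraic dispersion relation in the gap below the upper band, track its branches as a function of $k_x$, and then identify the asymptotic behavior as $|k_x|\to\infty$ responsible for the $a$-dependent transitions.

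\textbf{Construction of decaying modes.} For fixed $(k_x,\omega)$ in the gap I look for solutions of the ODE system \eqref{eq:ShallowWater_edge} of the form $\phi\,\ee^{-\lambda y}$ with $\Re\lambda>0$. Inserting $k_y=\ii\lambda$ in the bulk dispersion \eqref{eq:bulk_bands} shows that $\sigma:=k_x^2-\lambda^2$ satisfies the quadratic
\[
\nu^2\sigma^2+(1-2f\nu)\sigma+(f^2-\omega^2)=0,
\]
whose two roots $\sigma_1,\sigma_2$ are real and negative throughout the gap, giving two decay rates $\lambda_j=\sqrt{k_x^2-\sigma_j}>0$. Solving $(H(k_x,\ii\lambda_j)-\omega)\phi_j=0$ then yields explicit eigenvectors $\phi_j=(1,\hat u_j,\hat v_j)$ that are rational functions of $\lambda_j,k_x,\omega$.

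\textbf{Dispersion relation for the edge.} Imposing \eqref{eq:boundary_condition} on the general decaying ansatz $c_1\phi_1\ee^{-\lambda_1 y}+c_2\phi_2\ee^{-\lambda_2 y}$ produces a $2\times 2$ homogeneous linear system whose determinant is
\[
D(k_x,\omega;a)=\ii k_x\bigl(\hat v_1\hat u_2-\hat v_2\hat u_1\bigr)+a(\lambda_1-\lambda_2)\hat v_1\hat v_2.
\]
By Vieta on the above quadratic, $D$ can be written in closed form as a rational function of $k_x,\omega$ and the symmetric combinations $\lambda_1+\lambda_2$, $\lambda_1\lambda_2$, both of which reduce to expressions polynomial in $k_x,\omega$. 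Edge modes in the gap are exactly the vanishing locus $\{D(k_x,\omega;a)=0\}$.

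\textbf{Counting the branches.} Two complementary analyses of this locus yield $n_\mathrm{b}$. First, along the lower band boundary $\omega=\omega_+(k_x,0)$ one root $\sigma_j$ vanishes, $\lambda_j\to|k_x|$, and $D$ admits an explicit restriction; its real zeros in $k_x$, combined with the sign of $\partial_\omega D$ there, give the branches meeting the band at finite momentum. Secondly, for $|k_x|\to\infty$ at bounded $\omega$ both decay rates behave as $\lambda_j=|k_x|-\sigma_j/(2|k_x|)+O(|k_x|^{-3})$, the two eigenvectors become nearly collinear, and an expansion of $D$ to subleading order in $1/|k_x|$ reduces $D=0$ to an algebraic relation between $\omega$ and $a$ whose form depends on $\mathrm{sign}(k_x)$. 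The thresholds $a=\pm\sqrt 2$ appear as the precise values at which the corresponding inertial-wave asymptote reaches the band edge $\omega=f$, which is exactly the mechanism repelling one branch to $k_x=\pm\infty$. Together these two regimes produce the announced phase diagram $n_\mathrm{b}=2,3,1,2$.

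\textbf{Main obstacle.} The chief difficulty lies in the large-$|k_x|$ analysis: because $\phi_1$ and $\phi_2$ degenerate to leading order, $D$ vanishes to leading order in $1/|k_x|$ and one must expand to the next order to recover the nontrivial $a$-dependence. Isolating the combination $a^2-2$ responsible for the transitions requires careful bookkeeping of these subleading terms and a separate treatment of the two signs of $k_x$. The finite-$k_x$ counting should in contrast follow rather directly from the closed-form expression of $D$.
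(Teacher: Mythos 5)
Your reduction to the $2\times2$ determinant $D$ is sound --- it is essentially the quantity $g_\zeta$ of \eqref{eq:def_g_zeta} evaluated at imaginary $k_y$ --- but both analyses you build on it go wrong at the decisive points. First, the threshold behaviour at the band edge is misstated. For $f<\omega<\omega_+(k_x,0)$ the two roots are \emph{not} both negative (their product $(f^2-\omega^2)/\nu^2$ is negative there), and at $\omega=\omega_+(k_x,0)$ the critical root is $\sigma_+=k_x^2$, not $\sigma_j=0$ (that happens only at $\omega=f$); hence the corresponding decay rate satisfies $\lambda_+\to 0$, not $\lambda_+\to|k_x|$. It is precisely this vanishing of a decay rate that makes the signed count delicate: one must track whether the relevant zero of the analytically continued $D$ crosses from $\operatorname{Im}\kappa>0$ to $\operatorname{Im}\kappa<0$ as $k_x$ passes the touching point, and the sign of $\partial_\omega D$ on the band edge is not by itself that criterion. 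Second, and more seriously, your large-$|k_x|$ expansion is set in the wrong asymptotic regime. The branch responsible for the transition at $a=\pm\sqrt2$ hugs the band edge, where $\omega\sim\nu k_x^2\to\infty$; there the two roots obey $k_x^2+k_{y\pm}^2=\pm|\omega|/\nu$, so the decay rates scale like $c_\pm|k_x|$ with $c_+^2+c_-^2=2$ and $2+a(c_++c_-)=0$ --- this is where $\sqrt2$ comes from. In your regime of bounded $\omega$ with $\lambda_j=|k_x|-\sigma_j/(2|k_x|)+O(|k_x|^{-3})$ you only see the inertial-wave branches, which exist for all $a$, never reach the band edge (which at large $|k_x|$ sits at $\omega\sim\nu k_x^2$, not at $\omega=f$), and therefore contribute nothing to $n_\mathrm{b}$.

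Beyond these local errors, the global count is asserted rather than derived: nothing in the proposal explains why the band-edge restriction of $D$ has exactly the claimed signed number of zeros ($2,3,1,2$) in each regime of $a$, and establishing that directly, for all $a$, is the hard part. The paper avoids it by a topological argument: the total winding of the scattering amplitude along $\mathcal C_\epsilon$ equals $C_+=2$ independently of $a$ (eq.~\eqref{eq:BSC}), the finite-$k_x$ part of that winding equals $n_\mathrm{b}$ by the relative Levinson theorem \eqref{eq:Relative_Levinson}, and the remainder is localized at $|k_x|=\infty$ and computed from the leading behaviour \eqref{eq:g0_asym} of $g_0$, giving $0$ or $\mathrm{sign}(a)$ according to $|a|\gtrless\sqrt2$. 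A direct ODE approach would still need some such global input (or a genuinely complete enumeration of zeros), plus the second-order expansion \eqref{eq:ansatz_second_order}--\eqref{eq:second_order_sol} showing that the merging point $k_x^2=-d_+/c_+$ runs off to $+\infty$ as $a\searrow\sqrt2$, which is the content of the last sentence of the proposition.
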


A possible objection to the count is that the diagrams only cover a finite interval in $k_x$, thus missing some distant eigenvalue branches which, if included, could possibly yield $n_\mathrm{b}=2$ always. By the proposition this is explicitly not the case.
Another objection is that the definitions of $n_\mathrm{b}$ and $n_\mathrm{a}$ ought to be modified in situations like in the first and last diagrams, and more generally for $\abs{a}>\sqrt{2}$, since they feature one edge state that is asymptotic to the bulk spectrum at $k_x\rightarrow-\infty$ and $+\infty$, respectively. Since these cases are those for which \eqref{eq:Hatsugai_rel} holds true, a modification would not help.

The main goal of this paper is to explain such a mismatch in the bulk-edge correspondence. The proof of this proposition is a direct consequence of Thm.~\ref{thm:main} below. 

\subsection{Scattering theory} 

The scattering approach is an intermediate picture between bulk and edge that describes solutions of the edge problem as a superposition of bulk solutions, which are interpreted as scattering waves. It was used in \cite{GrafPorta13} to establish the bulk-edge correspondence for discrete models in solid state physics. Here we adapt it to continuous models where $k_x$ and $\omega$ are unbounded. 

In the upper half-plane, $k_y$ is not a good quantum number and the bulk normal mode $\psi = \widehat \psi \ee^{\ii ( k_x x  -\omega t)}$ from Sect.~\ref{sec:bulk} is a solution of the eigenvalue eq.~\eqref{eq:ShallowWater_edge} but does not satisfy the boundary condition~\eqref{eq:boundary_condition}. However for $\kappa >0$, $\ee^{\ii (k_x x - \kappa y-\omega t)}$ and $\ee^{\ii ( k_x x +\kappa y-\omega t )}$ can be seen as incoming and outgoing plane waves with respect to the boundary at $y=0$. Moreover they share the same frequency  $\omega_+(k_x,\kappa) = \omega_+(k_x,-\kappa)$. Actually, $\omega_+(k_x,k_y) = \omega_+(k_x,\kappa)$ admits two other solutions: $k_y=\kappa_\mathrm{ev},\, \kappa_\mathrm{div}$ that are purely imaginary. Explicitly,
\begin{equation}\label{eq:def_kappa_ev}
	\kappa_\mathrm{ev/div}(k_x,\kappa) = \pm \ii \sqrt{\kappa^2 + 2 k_x^2 + \dfrac{1-2\nu f}{\nu^2}} \in \pm \ii \mathbb R_+.
\end{equation}
The normal mode $\psi(k_x,\kappa_\mathrm{div})$ is divergent as $y \to \infty$ whereas $\psi(k_x,\kappa_\mathrm{ev})$ is evanescent away from the boundary. The former cannot be part of the solution to the boundary problem but the latter must be taken into account. 
\begin{defn}\label{def:scat_state} 
	For $k_x\in \mathbb R$ and $\kappa >0$ a \textbf{scattering state} is a solution $\psi_\mathrm{s} = \widetilde \psi_\mathrm{s} \ee^{\ii (k_x x-\omega t )}$ with $\omega=\omega_+(k_x,\kappa)$ of the form
	\begin{equation}\label{eq:scat_state}
		\widetilde \psi_\mathrm{s} = \psi_\mathrm{in} + \psi_\mathrm{out} + \psi_\mathrm{ev},
	\end{equation}
	satisfying the boundary condition~\eqref{eq:boundary_condition}, where the three terms correspond to bulk solutions of momenta $k_y =-\kappa\,,\kappa$, and $\kappa_{ev}$. The solution exists and is unique up to multiples.
\end{defn}
Given an open set $U_{out}\subset\mathbb{R}^2$, let $U_{in}\subset\mathbb{R}^2$ and $U_{ev}\subset\mathbb{R}\times\ii \mathbb{R}$ be the images under the maps $(k_x,\kappa)\mapsto(k_x,-\kappa)$ and $(k_x,\kappa)\mapsto (k_x,\kappa_{ev})$. Let $\psi_{in}=\psi_{in}(k_x,-\kappa)e^{-\ii \kappa y}$, $\psi_{out}=\psi_{out}(k_x,\kappa)e^{\ii \kappa y}$, $\psi_{ev}=\psi_{ev}(k_x,\kappa_{ev})e^{\ii \kappa_{ev} y}$ be a choice of solutions on $U_{in/out/ev}$, i.e. of sections that do not vanish anywhere in their domains. A unique solution~\eqref{eq:scat_state} is then singled out by requiring that the amplitude of $\psi_{in}$ be 1:
\begin{equation}\label{eq:scattering_solution}
	\widetilde\psi_s = \psi_{in} + S\psi_{out}+T\psi_{ev}
\end{equation}
with coefficients $S(k_x,\kappa)$, $T(k_x,\kappa)\in\mathbb{C}$
\begin{defn}\label{def:scat_ampl}
	$S(k_x,\kappa)$ is called the scattering amplitude for the chosen sections.
\end{defn}

\begin{rem}\label{rem:scat_state_uniqueness}
	The uniqueness of~\eqref{eq:scattering_solution} is conditioned on a choice of sections $\psi_{in}$, $\psi_{out}$, $\psi_{ev}$. The gauge freedom is to multiply any of them by a factor, $\widehat{\psi}\mapsto z\widehat{\psi}$, where $z\neq 0$ depends on $ k_x$, $\kappa$. For the discussion of scattering close to the threshold it will nevertheless be convenient to use a same section for $\psi_{in}$, $\psi_{out}$, that is moreover symmetric under $\kappa\mapsto -\kappa$. As a result, $\abs{S}=1$ (see Prop.~\ref{prop:S_U1}). The choice for $\psi_{ev}$ can be different, but we require all three to be non-vanishing and regular in a given neighborhood of interest.
\end{rem}
\begin{figure}[htb]
	\centering
	\includegraphics[scale=1]{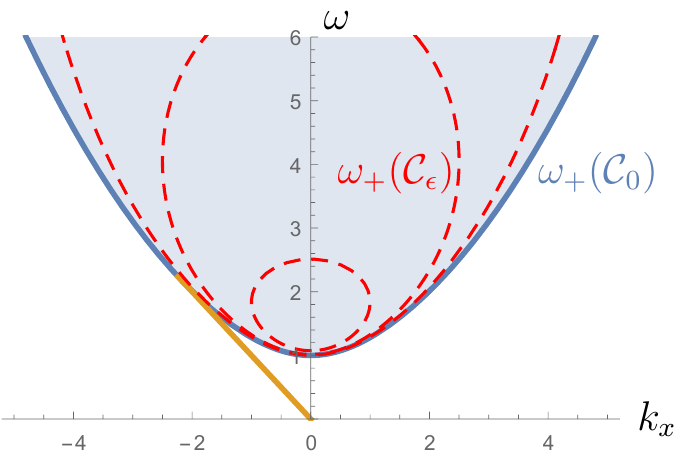}
	\caption{Image of $\mathcal C_\epsilon$ in the $(k_x,\omega)$-plane for several small values of $\epsilon$ (dashed red). As $\epsilon \to 0$, it approaches the bottom limit of the upper band. There, the argument of $S$ jumps by $2\pi$ when an edge mode branch disappears or emerges, according to the relative Levinson's theorem.\label{fig:dual_curve}}
\end{figure}
The scattering amplitude $S$ is on one hand a transition between bulk sections, and hence naturally related to the Chern number, and on the other hand by Levinson's theorem it is sensitive to the presence of edge modes when approaching the limit of the bulk band. To explore the bottom of the upper band, including $|k_x| = \infty$, we define the following dual variables
\begin{equation}\label{eq:dual_variables}
	k_x = \dfrac{\lambda_x}{\lambda_x^2+ \lambda_y^2}, \qquad k_y = \dfrac{-\lambda_y}{\lambda_x^2+ \lambda_y^2},
\end{equation}
for $\lambda_x,\lambda_y \in \mathbb R^2$. This is an orientation preserving change of variable that exchanges $0$ with $\infty$. For $\epsilon>0$ consider the following curve
\begin{equation}\label{eq:def_Cepsilon}
	\mathcal C_\epsilon = \left\{ \Big(k_x= \dfrac{\lambda_x}{\lambda_x^2+ \epsilon^2}, \kappa = \dfrac{\epsilon}{\lambda_x^2+\epsilon^2}+\epsilon\Big) | \lambda_x \in \check{\mathbb R} \right\}.
\end{equation}
This is a circle of center $(0,1/2\epsilon + \epsilon)$ and radius $1/2\epsilon$ in the $(k_x,\kappa)$-plane. One has $(k_x,\kappa) \to (0^\pm,\epsilon)$ as $\lambda_x \to \pm \infty$ and conversely $(k_x,\kappa) \to (0^\pm, 1/\epsilon+\epsilon)$ as $\lambda_x \to 0^\pm$. The choice of a reverse orientation for $\lambda_x \in \check{\mathbb R}$ implies that, in the limit $\epsilon \rightarrow 0$, $\mathcal C_\epsilon$ turns into the straight line $k_x \in \mathbb R, \kappa =0$, which corresponds to the bottom of the band $\omega_+(k_x,0)$. 

The image of $\mathcal C_\epsilon$ in the $(k_x,\omega)$ plane is plotted in Fig.~\ref{fig:dual_curve}. As $\epsilon \to 0$, it explores the bottom of the upper band, including $|k_x|= \infty$. Finally we separate the contributions near $k_x=0$ and $|k_x| = \infty$ as follows. For $\lambda_0>0$ we define
\begin{equation}\label{eq:def_Clambda0}
	\mathcal C_{\epsilon,\lambda_0} = \big\{ (k_x,\kappa) \in \mathcal C_\epsilon | \lambda_x \in [\lambda_0,-\lambda_0] \big\}, \qquad \mathcal C_{\epsilon,\lambda_0}^\perp = \mathcal C_\epsilon \setminus \mathcal C_{\epsilon,\lambda_0},
\end{equation}
so that $(0,0) \in \mathcal C_{0,\lambda_0}^\perp$ and $(\pm\infty,0) \in \mathcal C_{0,\lambda_0}$. The main result is then
\bigskip
\begin{thm}\label{thm:main}
	Let $a \in \mathbb R \setminus\{0,\pm\sqrt{2}\}$. The following statements hold:
	\begin{itemize}
		\item (Bulk-scattering correspondence) For all $\varepsilon>0$
		\begin{equation}\label{eq:BSC} 
			C_+ =  \dfrac{1}{2\pi \ii} \int_{\mathcal C_{\epsilon}} S^{-1} \dd S.
		\end{equation}
		\item (Relative Levinson's theorem) There exists $\lambda_0$ small enough such that $\forall \lambda, 0 < \lambda < \lambda_0$
		\begin{equation}\label{eq:Relative_Levinson}
		n_b = \lim\limits_{\epsilon \to 0} \dfrac{1}{2\pi \ii}  \int_{\mathcal C_{\epsilon,\lambda}^\perp} S^{-1} \dd S.
		\end{equation}
		\item (Violation of Levinson's theorem) There exists $\lambda_0$ small enough such that $\forall \lambda, 0 < \lambda < \lambda_0$,
		\begin{equation}\label{eq:Levinson_violation}
		\lim\limits_{\epsilon \to 0} \dfrac{1}{2\pi \ii}  \int_{\mathcal C_{\epsilon,\lambda}} S^{-1} \dd S = \left\lbrace \begin{array}{ll}
		0,& |a| > \sqrt{2}, \\
		\mathrm{sign}(a), & 0<|a|<\sqrt{2}.
		\end{array}\right.
		\end{equation}
		Moreover for $a>\sqrt{2}$ (resp. $<-\sqrt{2}$) there is an edge mode branch merging with the bulk band at $k_x = \infty$ (resp. $-\infty$). For $|a|<\sqrt{2}$ there are no edge modes in the neighborhood of the bulk band as $|k_x|\rightarrow \infty$.
	\end{itemize}
\end{thm}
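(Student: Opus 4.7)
The three statements share a common prerequisite, namely an explicit formula for $S(k_x,\kappa)$. To obtain it, I would first diagonalize $H(\bvec{k})$ of \eqref{eq:ShallowWater_bulk} on the upper band, yielding an eigenvector $\widehat\psi_+(k_x,k_y)$ for $\omega_+(k_x,k_y)$. For fixed $(k_x,\omega)$ with $\omega=\omega_+(k_x,\kappa)$, the characteristic equation $\omega_+^2(k_x,k_y)=\omega^2$ is quadratic in $k_y^2$ and admits four roots $k_y\in\{\pm\kappa,\kappa_{\mathrm{ev}},\kappa_{\mathrm{div}}\}$, cf.~\eqref{eq:def_kappa_ev}, of which only the first three are admissible in the upper half-plane. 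Substituting the ansatz \eqref{eq:scattering_solution} into the two boundary constraints \eqref{eq:boundary_condition} gives a $2\times 2$ linear system whose closed-form solution expresses $S$ and $T$ as rational functions of $k_x,\kappa,\kappa_{\mathrm{ev}},a$. With the symmetric gauge mentioned in Remark~\ref{rem:scat_state_uniqueness} one has $\abs{S}=1$, so $S^{-1}\dd S$ is purely imaginary and each of the three integrals appearing in the theorem is a genuine winding number.

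For the bulk-scattering correspondence \eqref{eq:BSC}, I would interpret $S$ as a transition function between analytic continuations of $\psi_{\mathrm{in}}$ and $\psi_{\mathrm{out}}$ viewed as sections of the upper band eigenbundle on two overlapping patches covering the compactified momentum sphere. Up to homotopy, $\mathcal{C}_\epsilon$ (lifted via $k_y=\kappa$) corresponds to the equator separating these patches, and the standard identification of the first Chern number of a line bundle with the winding of such a transition function along a separating loop gives \eqref{eq:BSC}, consistent with $C_+=2$ from Proposition~\ref{prop:chern_numbers}. The exclusion $a\notin\{0,\pm\sqrt 2\}$ is used precisely to guarantee that $S$ has no zeros or poles on $\mathcal{C}_\epsilon$. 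For the relative Levinson theorem \eqref{eq:Relative_Levinson}, the situation along $\mathcal{C}_{\epsilon,\lambda}^\perp$ stays bounded in $k_x$ and reduces to the conventional setting: as $\epsilon\to 0$ the curve approaches the band edge from inside the bulk spectrum; $S$ admits a meromorphic continuation in $\kappa$ to the upper imaginary axis, where its zeros are precisely the edge-mode frequencies below the band; and the argument principle applied to a contour closing $\mathcal{C}_{\epsilon,\lambda}^\perp$ into the edge-mode region produces $n_\mathrm{b}$, in the same spirit as the scattering proof of bulk-edge correspondence in \cite{GrafPorta13}.

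The main obstacle is the violation statement \eqref{eq:Levinson_violation}, which concerns the joint limit $|k_x|\to\infty$, $\kappa\to 0$ in which the three admissible wavevectors become simultaneously large (both $\kappa$ and $\abs{\kappa_{\mathrm{ev}}}$ blow up by \eqref{eq:def_kappa_ev}) while their eigenvectors collapse to the common limit $\vec{e}\to(0,0,-1)$. Switching to the dual variables \eqref{eq:dual_variables}, $\mathcal{C}_\epsilon$ becomes the horizontal line $\lambda_y=\epsilon$ to leading order; I would substitute the resulting expansions of $\kappa$ and $\kappa_{\mathrm{ev}}$ into the explicit $S$ of Step~1 and extract a nontrivial homogeneous-of-degree-zero limit $S_\infty(\lambda_x/\epsilon,a)$ as $\epsilon\to 0$ along $\mathcal{C}_{\epsilon,\lambda}$. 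The winding of $S_\infty$ along this arc yields the right-hand side of \eqref{eq:Levinson_violation}, and the critical values $a=\pm\sqrt 2$ are exactly those at which $S_\infty$ crosses a point on the unit circle that changes the winding count by one, marking the threshold where an edge mode is expelled to $k_x=\pm\infty$. Finally, the physical picture stated at the end of the theorem would be confirmed by solving \eqref{eq:ShallowWater_edge} with \eqref{eq:boundary_condition} in the asymptotic regime $|k_x|\to\infty$ at frequency near $\omega_+(k_x,0)$ and exhibiting (resp.~ruling out) an exponentially localized solution according to $\abs{a}\gtrless\sqrt 2$. Conceptually, the obstruction is that $S$ does not tend to a fixed point at $|k_x|=\infty$, which breaks the hypothesis underlying standard Levinson theorems and is the ultimate source of the mismatch with $C_+$.
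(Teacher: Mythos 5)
Your overall strategy matches the paper's at every stage: an explicit $S$ from the $2\times2$ linear system imposed by \eqref{eq:boundary_condition}, unimodularity in a symmetric gauge, $S$ as a transition function between two sections covering the compactified momentum sphere for \eqref{eq:BSC}, the Graf--Porta pole-crossing argument for \eqref{eq:Relative_Levinson}, and a winding computation in the dual variables \eqref{eq:dual_variables} for \eqref{eq:Levinson_violation}. There is, however, one point where the plan as written would go wrong, and it sits exactly where the theorem is most delicate. Step 1 presumes a single eigenvector $\widehat\psi_+(k_x,k_y)$ obtained by diagonalizing $H$ and used throughout to define $S$. Since $C_+=2$, no global regular non-vanishing eigensection exists; every explicit formula (e.g.\ \eqref{eq:section_infty}) is singular somewhere, and the natural candidates are singular either at $z=k_x+\ii k_y=0$ or at $z=\infty$. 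The partial integrals over $\mathcal C_{\epsilon,\lambda}$ and $\mathcal C_{\epsilon,\lambda}^\perp$ are \emph{not} separately gauge invariant: a change of section multiplies $S$ by $t^\zeta_\infty(k_x,-\kappa)/t^\zeta_\infty(k_x,\kappa)$, which contributes winding $2$ on any arc passing through the singularity of the section. Since $z=\infty$ is precisely the point probed by \eqref{eq:Levinson_violation}, one must (i) use a section regular at $\infty$ (the paper takes $\widehat\psi^0$, rotation invariant there) for that computation, (ii) use a section regular in a neighborhood of all of $\mathcal C_\epsilon$ (the paper takes $\widehat\psi^{\ii\zeta_y}$, $\zeta_y>\epsilon$) for \eqref{eq:BSC} and \eqref{eq:Relative_Levinson}, and (iii) verify that the transition between the two is regular on the arcs in question so that the three displayed identities are mutually consistent. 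With the "obvious" eigenvector, which winds like $z/\abs{z}$ at $\infty$, the right-hand side of \eqref{eq:Levinson_violation} would come out shifted by $2$.

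A smaller remark concerns the final clause about edge modes at $k_x=\pm\infty$. Your plan (solve \eqref{eq:ShallowWater_edge} asymptotically and look for a localized solution) is workable, but the mechanism you should expect to uncover is that near $\infty$ the roots $\kappa$ and $\kappa_{\mathrm{ev}}$ become comparable in magnitude and exchange roles, so the usual dictionary breaks: a pole of $S$ continued to $\Im\kappa>0$ (it is the poles, i.e.\ zeros of the denominator $g$, not the zeros of $S$, that mark bound states) need not produce an edge mode because $\Im\kappa_{\mathrm{ev}}$ can simultaneously turn negative, and conversely. The paper makes this quantitative by inserting the ansatz $k_{y\pm}=\pm\ii c_\pm k_x$ into the leading form of $g_0$, solving $2+a(c_++c_-)=0$, $c_+^2+c_-^2=2$ with $c_+^2<1<c_-^2$, and reading off the signs of $\Im k_{y\pm}$. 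Your stated obstruction ("$S$ does not tend to a fixed point at $\infty$") is the symptom; the failure of the pole--bound-state correspondence is the cause, and it is what you need to prove the edge-mode claims for $\abs{a}\gtrless\sqrt2$ rather than only the winding numbers.
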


Thus, the change of argument of the scattering amplitude along $\mathcal C_\epsilon$ coincides with the Chern number for every $a$, thanks to a non-trivial contribution \eqref{eq:Levinson_violation} at $|k_x|=\infty$ that compensates the missing or superfluous edge modes observed in Sect.~\ref{sec:edge}. Usually a jump in the argument of $S$ is associated to an edge mode branch disappearing or emerging from the bulk band limit \cite{GrafPorta13}. This statement is valid as long as $k_x$ is finite, leading to \eqref{eq:Relative_Levinson}. However, the opposite occurs at $|k_x|=\infty$: the argument of $S$ jumps while there is no edge mode branch merging in the spectrum, and conversely. This shows an explicit violation of Levinson's theorem and proves the mismatch in the number of edge modes from Prop.~\ref{prop:nb_finite_kx}.

The proofs of \eqref{eq:BSC} and \eqref{eq:Relative_Levinson} are done in \ref{sec:BSC_proof}, respectively \ref{sec:relative_Levinson}. They are adapted from \cite{GrafPorta13} where $k_x, k_y \in \mathbb T^2$, the two-dimensional Brillouin torus. The key point is to study the poles and zeros of $S$ after analytic continuation in $\kappa$. The main result of the paper, eq. \eqref{eq:Levinson_violation}, is proved in \ref{sec:Levinson_violation} by studying the singularity of such a complex continuation in  $\kappa$ as $|k_x| \rightarrow \infty$, eventually responsible for the violation of Levinson's theorem. Notice that $S$ depends on the choice of section appearing in \eqref{eq:scat_state}, but the theorem is true as long as such sections are regular in a neighborhood of $\mathcal C_\epsilon$. Moreover we can even use sections that have singularities near $\mathcal C_{\epsilon,\lambda}$ (resp. $\mathcal C_{\epsilon,\lambda}^\perp$) when dealing with \eqref{eq:Relative_Levinson} (resp. \eqref{eq:Levinson_violation}). Such singular sections have extra symmetries that actually simplify the proof.

\begin{rem}
	Such a violation is not always occurring in the shallow-water model. Indeed, the standard Dirichlet boundary condition:
	\begin{equation}\label{eq:Dirichlet_BC}
			u|_{y=0} = 0, \qquad v|_{y=0} = 0, 
	\end{equation}
	leads to $n_b=2=C_+$ with no argument jump of $S$ at $|k_x|=\infty$ and no asymptotic edge mode branch near the bottom of the upper bulk band as $|k_x| \rightarrow \infty$. Thus, unbounded parameters $(k_x,\omega)$ do not necessarily lead to a violation of Levinson's theorem. This non-anomalous case is detailed in App.~\ref{app:Dirichlet}.
\end{rem}

\section{Proofs\label{sec:proofs}}

\subsection{Sections, scattering amplitudes and bulk-scattering correspondence\label{sec:BSC_proof}}

Before proving Thm.~\ref{thm:main} we discuss the ambiguity in the definition of the scattering state and amplitude, due to the gauge freedom of  bulk eigensections. Additionally, we provide explicit expressions, that are used in the next sections. In the following we identify the compactified $k$-plane with the Riemann sphere $\mathbb C \cup \{\infty\} \cong S^2$ via $z =k_x+\ii k_y$. Since $C_+=2$, it is impossible to find a global bulk eigensection $\widehat \psi$ that is regular for all $z \in S^2$. We need at least two distinct ones, that are regular locally on two overlapping patches to cover the sphere. This leads to distinct scattering states and scattering amplitudes. It is readily verified that $H\widehat{\psi}^\infty=\omega_+\widehat{\psi}^\infty$, where
\begin{equation}\label{eq:section_infty}
	\widehat \psi^\infty(\bvec{k}) = \dfrac{1}{\sqrt{2}}\dfrac{1}{k_x- \ii k_y} \begin{pmatrix}
		\bvec{k}^2/\omega_+ \\ k_x-\ii k_y q \\ k_y+ \ii k_x q 
	\end{pmatrix}\,, \quad 
	q(\bvec{k}) := \tfrac{f-\nu \bvec{k}^2}{\omega_+}\,,\quad \omega_+=\omega_+(\bvec{k})\,.
\end{equation}
Notice that $q \rightarrow 1$ (resp. $-1$) as $k\rightarrow 0$ (resp. $\infty$). Thus \eqref{eq:section_infty} defines a section of the eigenbundle of $\omega_+$ that is smooth for all $z \in \mathbb C$, including $z=0$, but not at $\infty$, where it is singular and winds like $z/\abs{z}$. However $z=\infty$ belongs to the curve $\mathcal C_\epsilon$ as $\epsilon \rightarrow 0$, see \eqref{eq:def_Cepsilon}, so that $\widehat \psi^\infty$ cannot be used directly in the proof of Thm.~\ref{thm:main}. Instead we define for $\zeta=\zeta_x + \ii \zeta_y \in \mathbb C$
\begin{equation}\label{eq:section_zeta}
	\widehat \psi^\zeta = t^\zeta_\infty \widehat \psi^\infty, \qquad t^\zeta_\infty(z)=\frac{\bar{z}-\bar{\zeta}}{z-\zeta}
\end{equation}
which is regular for all $z \in S^2\setminus\{\zeta\}$, including $\infty$ and singular in $\zeta$. We shall mainly use $\zeta=\ii \zeta_y$ with $\zeta_y>0$ that is away from $\mathcal C_\epsilon$ for $\epsilon$ small enough, and $\zeta=0$ that is rotation invariant near $z=\infty$. According to Def.~\ref{def:scat_state}, the scattering state for each section $\widehat \psi^\zeta$, $\zeta \in S^2$, reads
\begin{equation}\label{eq:scat_explicit}
	\widetilde \psi_\mathrm{s}^\zeta := \psi_\mathrm{in}^\zeta + S_\zeta \psi_\mathrm{out}^\zeta + T_\zeta \psi_\mathrm{ev}^\infty,\\
\end{equation}
with 
\begin{equation}
	\psi_\mathrm{in}^\zeta = \widehat \psi^\zeta(k_x,-\kappa)\ee^{-\ii \kappa y}, \qquad  \psi_\mathrm{out}^\zeta = \widehat \psi^\zeta(k_x,\kappa)\ee^{\ii \kappa y}, \qquad \psi_\mathrm{ev}^\infty = \widehat \psi^\infty(k_x,\kappa_\mathrm{ev})\ee^{\ii \kappa_\mathrm{ev}y}.
\end{equation}
Notice that we have dropped the $\omega$-dependence since $\omega = \omega_+(k_x,\kappa)$. We recall that $\kappa >0$ in the definition of the scattering state, so that the choice of some $\zeta= \ii \zeta_y$ with $\zeta_y>0$ guarantees that $\psi_\mathrm{in}^\zeta$ is regular for  $k_x \in \mathbb R$ and  $\kappa >0$.  Moreover, $S_\zeta \psi_\mathrm{out}^\zeta$ is also regular in the whole upper half-plane, even though $\psi_\mathrm{out}^\zeta$ is singular at $\kappa=\zeta_y>0$. Finally, notice that $\kappa_\mathrm{ev}(k_x,\kappa) \to -\ii \nu^{-1}\sqrt{1-2\nu f} \neq 0$, $(k_x,\kappa) \to 0$ and $q(k_x,\kappa_\mathrm{ev}) \to 1$, $(k_x,\kappa)\to \infty$. Thus $\widehat \psi^\infty(k_x,\kappa_\mathrm{ev})$ is regular in the whole upper half-plane, including at $\infty$, and appears as a common and convenient choice of the evanescent part in \eqref{eq:scat_explicit} for all $\zeta \in S^2$. Note that because of Rem.~\ref{rem:scat_state_uniqueness} we can choose the phase of $\psi_\mathrm{ev}$ independently from the choice of ${\psi}_{in/out}$.

\begin{lem}
	Let $\zeta \in S^2$, $k_x \in \mathbb R$ and $\kappa >0$ so that $k_x + \ii \kappa \in S^2 \setminus\{\zeta\}$. Then
	\begin{equation}\label{eq:transition_S}
	S_\zeta(k_x,\kappa) = \dfrac{t^\zeta_\infty(k_x,-\kappa)}{t^\zeta_\infty(k_x,\kappa)} S_\infty(k_x,\kappa).
	\end{equation}
\end{lem}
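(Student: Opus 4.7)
The plan is to deduce the transformation law from the uniqueness clause of Def.~\ref{def:scat_state}, which says that the scattering state is determined up to an overall scalar. The only real content is to track how the in/out amplitudes rescale when one changes the bulk eigensection by a non-vanishing factor.

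First, I would start from the scattering state associated with $\widehat\psi^\infty$, namely
\begin{equation*}
\widetilde\psi_\mathrm{s}^\infty = \psi_\mathrm{in}^\infty + S_\infty\,\psi_\mathrm{out}^\infty + T_\infty\,\psi_\mathrm{ev}^\infty,
\end{equation*}
which is well-defined in a neighborhood where $\widehat\psi^\infty$ is regular (and $\kappa_\mathrm{ev}$ does not cause trouble, which is the case here by the remarks preceding the lemma). Since by~\eqref{eq:section_zeta} we have $\widehat\psi^\zeta = t^\zeta_\infty\,\widehat\psi^\infty$ as sections of the eigenbundle of $\omega_+$, the same holds for the propagating modes evaluated at $(k_x,\pm\kappa)$:
\begin{equation*}
\psi_\mathrm{in}^\infty = \frac{1}{t^\zeta_\infty(k_x,-\kappa)}\,\psi_\mathrm{in}^\zeta,\qquad
\psi_\mathrm{out}^\infty = \frac{1}{t^\zeta_\infty(k_x,\kappa)}\,\psi_\mathrm{out}^\zeta.
\end{equation*}

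Next, I would substitute these identities into the expression for $\widetilde\psi_\mathrm{s}^\infty$ and multiply the whole line by $t^\zeta_\infty(k_x,-\kappa)$, obtaining
\begin{equation*}
t^\zeta_\infty(k_x,-\kappa)\,\widetilde\psi_\mathrm{s}^\infty
= \psi_\mathrm{in}^\zeta
+ \frac{t^\zeta_\infty(k_x,-\kappa)}{t^\zeta_\infty(k_x,\kappa)}\,S_\infty\,\psi_\mathrm{out}^\zeta
+ t^\zeta_\infty(k_x,-\kappa)\,T_\infty\,\psi_\mathrm{ev}^\infty.
\end{equation*}
The right-hand side has the form~\eqref{eq:scat_explicit} with incoming amplitude normalized to $1$ and still satisfies the boundary condition~\eqref{eq:boundary_condition}, so by the uniqueness statement in Def.~\ref{def:scat_state} it must coincide with $\widetilde\psi_\mathrm{s}^\zeta$. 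Matching the coefficients of $\psi_\mathrm{out}^\zeta$ yields precisely the claimed formula~\eqref{eq:transition_S} (and, as a by-product, $T_\zeta = t^\zeta_\infty(k_x,-\kappa)\,T_\infty$, which is not needed here).

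Since the argument is purely algebraic, no step looks genuinely hard. The only point that deserves care is to ensure that the manipulations make sense pointwise, i.e.\ that $t^\zeta_\infty(k_x,\pm\kappa)\neq 0,\infty$. This is guaranteed by the hypothesis $k_x+\ii\kappa\in S^2\setminus\{\zeta\}$ (which excludes the pole of $t^\zeta_\infty$ at the outgoing momentum) together with $\kappa>0$ and $\zeta\neq\infty$ as built into~\eqref{eq:section_zeta} and the assumption in the lemma (which excludes the zero at the conjugate $\bar\zeta$ for the incoming momentum). Under these conditions the whole rearrangement is legitimate, and the uniqueness yields the transition formula.
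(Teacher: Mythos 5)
Your argument is correct, and it takes a somewhat different route from the paper's. The paper first solves the boundary-condition system explicitly, obtaining $S_\zeta(k_x,\kappa)=-g_\zeta(k_x,-\kappa)/g_\zeta(k_x,\kappa)$ with $g_\zeta$ the $2\times2$ determinant of~\eqref{eq:def_g_zeta}; since that determinant is linear in the column built from $\widehat\psi^\zeta=t^\zeta_\infty\widehat\psi^\infty$, one has $g_\zeta(k_x,\pm\kappa)=t^\zeta_\infty(k_x,\pm\kappa)\,g_\infty(k_x,\pm\kappa)$ and the transition formula follows by inspection. You instead never compute $S$: you rescale the in/out sections, renormalize the incoming amplitude to $1$, and invoke the uniqueness clause of Def.~\ref{def:scat_state} (established in App.~C, Lemmas~\ref{lem:linear_indep} and~\ref{lem:appS_unique}) to identify the result with $\widetilde\psi_\mathrm{s}^\zeta$. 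Your version is cleaner and manifestly independent of the particular boundary condition, while the paper's explicit $g_\zeta$ is needed anyway for the subsequent analysis of poles and zeros, so little is lost by deriving the covariance from it. One small imprecision in your closing remark: $t^\zeta_\infty(z)=(\bar z-\bar\zeta)/(z-\zeta)$ has unit modulus away from $z=\zeta$ and its only singular point (a phase discontinuity, simultaneously ``zero'' and ``pole'') is at $z=\zeta$ itself, not at $\bar\zeta$; the hypothesis $k_x+\ii\kappa\neq\zeta$ protects the outgoing factor, and the incoming one is automatically safe for the choices $\zeta=\ii\zeta_y$, $\zeta_y\geq 0$, actually used, since $k_x-\ii\kappa$ lies in the open lower half-plane. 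This caveat applies equally to the paper's own formulation and does not affect the validity of your proof.
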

\begin{proof}
The scattering amplitude can be computed explicitly from \eqref{eq:scat_explicit} and boundary condition \eqref{eq:boundary_condition}. For  $\zeta \in S^2$ we denote $u_\zeta$ and $v_\zeta$ the second and third component of $\widehat \psi^\zeta$, according to \eqref{eq:section_infty} or \eqref{eq:section_zeta}. We infer
\begin{equation}\label{eq:S_vs_g}
	S_\zeta(k_x,\kappa) = - \dfrac{g_\zeta(k_x,-\kappa)}{g_\zeta(k_x,\kappa)},\qquad T_\zeta(k_x,\kappa) = - \dfrac{h_\zeta(k_x,\kappa)}{g_\zeta(k_x,\kappa)},
\end{equation}
where
\begin{equation}\label{eq:def_g_zeta}
	g_\zeta(k_x,\kappa) = \begin{vmatrix}
		k_x  u_\zeta(k_x,\kappa) + a \kappa v_\zeta (k_x,\kappa) & k_x u_\infty(k_x,\kappa_\mathrm{ev}) + a \kappa_\mathrm{ev}  v_\infty (k_x,\kappa_\mathrm{ev})\\
		v_\zeta(k_x,\kappa) & v_\infty(k_x,\kappa_\mathrm{ev})
	\end{vmatrix},
\end{equation}
and
\begin{equation}
	h_\zeta(k_x,\kappa) = \begin{vmatrix}
		k_x  u_\zeta(k_x,\kappa) + a \kappa v_\zeta (k_x,\kappa) & k_x  u_\zeta(k_x,-\kappa) - a \kappa v_\zeta (k_x,-\kappa)\\
		v_\zeta(k_x,\kappa) & v_\zeta(k_x,-\kappa)
	\end{vmatrix}.
\end{equation}
For $(k_x,\kappa) \in S^2 \setminus\{\zeta\}$, $\widehat \psi^\zeta$ and $\widehat \psi^\infty$ are related through $t_\infty^\zeta$, leading to \eqref{eq:transition_S} by inspection of \eqref{eq:def_g_zeta}.
\end{proof}

\begin{proof}[Proof of Thm.~\ref{thm:main}, eq.~\eqref{eq:BSC}.]
We compute the Chern number as the winding number of the transition function between two sections that cover $S^2$. The key point is that the scattering state naturally provides two such sections, for which the transition function is the scattering amplitude. 

Let $\epsilon >0$ and $\zeta = \ii \zeta_y$ with $\zeta_y>\epsilon$. The section $\widehat \psi^\zeta(k_x,\kappa)$ is regular everywhere except at $k_x+ \ii \kappa = \zeta$. In particular, it is regular  along $\mathcal C_\epsilon$, including at $\infty$. Then consider $S_\zeta(k_x,\kappa) \widehat{\psi}^\zeta(k_x,\kappa)$. This section is regular for all $\kappa >0$. Indeed,
\begin{align}
	S_\zeta(k_x,\kappa) \widehat{\psi}^\zeta(k_x,\kappa) & =  \dfrac{t_\infty^\zeta(k_x,-\kappa)}{t_\infty^\zeta(k_x,\kappa)}\dfrac{t_\zeta^0(k_x,\kappa)}{t_\infty^0(k_x,-\kappa)}S_0(k_x,\kappa)\dfrac{t_\infty^\zeta(k_x,\kappa)}{t_\infty^0(k_x,\kappa)}\widehat \psi^{0}(k_x,\kappa)\cr
	&=S_0(k_x,\kappa) \dfrac{t_\infty^\zeta(k_x,-\kappa)}{t_\infty^0(k_x,-\kappa)} \widehat \psi^{0}(k_x,\kappa)\;,
\end{align}
where we have used \eqref{eq:section_zeta} and \eqref{eq:transition_S}. Notice that $|S_\zeta(k_x,\kappa)|=1$ for $k_x + \ii \kappa \in S^2 \setminus\{\zeta\}$ thanks to Prop.~\ref{prop:S_U1}. In particular, $S_0$ is regular for $\kappa >0$. Finally, for $k_x,\kappa \in \mathcal C_\epsilon$, the transition between the two sections is by definition $S_\zeta$ so that
\begin{equation}
	C_+ = \dfrac{1}{2\pi \ii}\int_{\mathcal C_{\epsilon}} S^{-1}_\zeta \dd S_\zeta
\end{equation}
which concludes the proof. 
\end{proof}

\subsection{\texorpdfstring{Relative Levinson's theorem for finite $k_x$}{Relative Levinson's theorem for finite kx}\label{sec:relative_Levinson}} 
The proof of Thm.~\ref{thm:main}, eq.~\eqref{eq:Relative_Levinson} is a direct consequence of
\begin{thm}[{\cite[Thm.~6.11]{GrafPorta13}}] \label{thm:GP} Let $\epsilon >0$ and $\zeta = \ii \zeta_y$ with $\zeta_y>\epsilon$. Let $k_x^1 < k_x^2$ that do not correspond to a crossing of an edge mode branch with the bulk region in the spectrum of \eqref{eq:ShallowWater_edge}.
Then
\begin{equation}\label{eq:thmGP}
	\lim\limits_{\epsilon \to 0} \arg S_\zeta\big((k_x,\epsilon) \big)\big|_{k_x^1}^{k_x^2} = 2\pi n(k_x^1,k_x^2)
\end{equation}
where $\arg$ denotes a continuous argument and $n(k_x^1,k_x^2)$ is the signed number of edge mode branches emerging $(+)$ or disappearing $(-)$ at the lower band limit between $k_{x,1}$ and $k_{x,2}$, as $k_x$ increases.
\end{thm}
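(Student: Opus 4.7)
The plan is to adapt the argument-principle proof of Graf-Porta to our continuous setting, restricted to a compact momentum window. The first step is to analytically continue $S_\zeta(k_x,\kappa)$ in $\kappa$ from $\mathbb R_+$ into a complex neighbourhood. Using the explicit formula $S_\zeta = -g_\zeta(k_x,-\kappa)/g_\zeta(k_x,\kappa)$ from \eqref{eq:S_vs_g}, together with the analytic dependence of $\kappa_\mathrm{ev}(k_x,\kappa)$ in \eqref{eq:def_kappa_ev} and the explicit sections \eqref{eq:section_infty}--\eqref{eq:section_zeta}, $S_\zeta$ extends to a meromorphic function in a complex neighbourhood of $\mathbb R_+$, provided $\zeta$ is kept bounded away from this region. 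I would pick $\zeta = \ii\zeta_y$ with $\zeta_y>\epsilon$ as in the proof of \eqref{eq:BSC} so that $\widehat{\psi}^\zeta$ is regular throughout the relevant domain.

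Second, I would identify the poles of $S_\zeta(k_x,\cdot)$ in the open upper half plane $\{\Im\kappa>0\}$ with the edge modes below the upper bulk band at momentum $k_x$. A pole $\kappa_0$ with $\Im\kappa_0>0$ corresponds to a zero of $g_\zeta(k_x,\kappa_0)$, which in turn yields a non-trivial solution $\psi_\mathrm{out}^\zeta + (T/S)\psi_\mathrm{ev}^\infty$ of the bulk equation at frequency $\omega = \omega_+(k_x,\kappa_0)$ lying in the gap; this solution both decays as $y\to\infty$ (since $\ee^{\ii\kappa_0 y}$ now damps) and satisfies the boundary condition~\eqref{eq:boundary_condition}, hence is an edge mode. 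Conversely, every such edge mode produces such a pole. Zeros of $S_\zeta$ in the upper half plane are ruled out by $|S_\zeta|=1$ on $\mathbb R_+$ (cf.~Prop.~\ref{prop:S_U1}) together with the Schwarz-type reflection identity $\overline{S_\zeta(k_x,\bar\kappa)} = S_\zeta(k_x,\kappa)^{-1}$, which places zeros in the lower half plane.

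Third, I would track the pole trajectories $k_x\mapsto \kappa_j(k_x)$ as continuous functions of momentum. An edge-mode branch that emerges from (respectively disappears into) the lower band limit at some $k_x^\ast \in (k_x^1,k_x^2)$ corresponds to a pole $\kappa_j(k_x)$ that crosses from the real axis into the open upper half plane (respectively the reverse) as $k_x$ increases through $k_x^\ast$. A local expansion $g_\zeta(k_x,\kappa) \sim c\,(\kappa - \kappa_j(k_x))$ with nonvanishing $c$ near $(k_x^\ast,0)$, substituted into $S_\zeta = -g_\zeta(k_x,-\kappa)/g_\zeta(k_x,\kappa)$ evaluated at $\kappa=\epsilon$, produces a jump of $\arg S_\zeta(k_x,\epsilon)$ by $+2\pi$ (respectively $-2\pi$) as $k_x$ crosses $k_x^\ast$ in the limit $\epsilon\to 0^+$. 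Between crossings $\arg S_\zeta(k_x,\epsilon)$ is continuous and convergent as $\epsilon\to 0^+$, so summing the jumps over $(k_x^1,k_x^2)$ yields $2\pi\, n(k_x^1,k_x^2)$.

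The main obstacle, compared with the discrete Bloch setting of Graf-Porta, is ensuring that the analytic continuation behaves uniformly on the compact interval $[k_x^1,k_x^2]$: one must verify that no spurious zeros or poles of $g_\zeta$ accumulate along the real axis outside genuine edge-mode crossings, and that the limit $\epsilon\to 0^+$ is uniform on compact subsets of $[k_x^1,k_x^2]$ away from crossings. This relies on the explicit formulae for $u_\zeta,v_\zeta,u_\infty,v_\infty$ and on $\kappa_\mathrm{ev}$ remaining bounded away from $0$ for $(k_x,\kappa)$ in a neighbourhood of $\mathbb R\times\{0\}$, which is guaranteed by \eqref{eq:def_kappa_ev} and the standing assumption $\nu<1/(4f)$. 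The restriction to finite $k_x^1,k_x^2$ is essential: the anomaly at $|k_x|=\infty$, where uniformity fails, is precisely what is treated separately by \eqref{eq:Levinson_violation}.
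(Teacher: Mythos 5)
Your proposal follows essentially the same route as the paper, which in fact does not reprove this statement but cites Graf--Porta [Thm.~6.11] and merely illustrates the mechanism: analytic continuation of $S_\zeta$ in $\kappa$, identification of edge modes with zeros of $g_\zeta$ (poles of $S_\zeta$) in the upper half-plane, and a $\pm 2\pi$ jump of $\arg S_\zeta(k_x,\epsilon)$ each time such a zero crosses the real axis as $k_x$ varies. One small caveat: zeros of $S_\zeta$ in the upper half-plane are not actually ruled out (they come from zeros of $g_\zeta(k_x,-\kappa)$ and correspond to the divergent-state branches), but this does not affect your conclusion, since your local expansion of the full quotient $-g_\zeta(k_x,-\kappa)/g_\zeta(k_x,\kappa)$ accounts for the simultaneous zero and pole crossings and correctly yields $2\pi$ per branch.
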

In particular for $k_{x,1}<0$ and $k_{x,2}>0$ large enough one has $n(k_x^1,k_x^2) = n_\mathrm{b}$. Moreover, $\mathcal C_{\epsilon,\lambda_0}^\perp \simeq \{(\tfrac{1}{\lambda_x},\epsilon) | \lambda_x \in \mathbb R \setminus [-\lambda_0,\lambda_0]\}$ when $\epsilon \rightarrow 0$, so that for $\lambda_0$ small enough and with the orientation of $C_{\epsilon,\lambda_0}^\perp$, \eqref{eq:thmGP} is equivalent to \eqref{eq:Relative_Levinson}. We refer to \cite{GrafPorta13} for the proof of  Thm.~\ref{thm:GP}, which is quite general and applies to our continuous model because \eqref{eq:thmGP} is valid as long as $(k_x,\kappa)$ belong to a finite path that does not cross $\infty$. Below we briefly illustrate the main elements of the proof in our explicit model, see also Fig.~\ref{fig:exmp_levinson}, in order to compare with the anomalous case at $\infty$ in the next section. Furthermore, notice that this statement has been also checked numerically for the shallow-water model with boundary condition \eqref{eq:boundary_condition} and $a= \pm 1$ in \cite{TauberDelplaceVenaille19bis}.

\begin{figure}[htb]
	\includegraphics[scale=1]{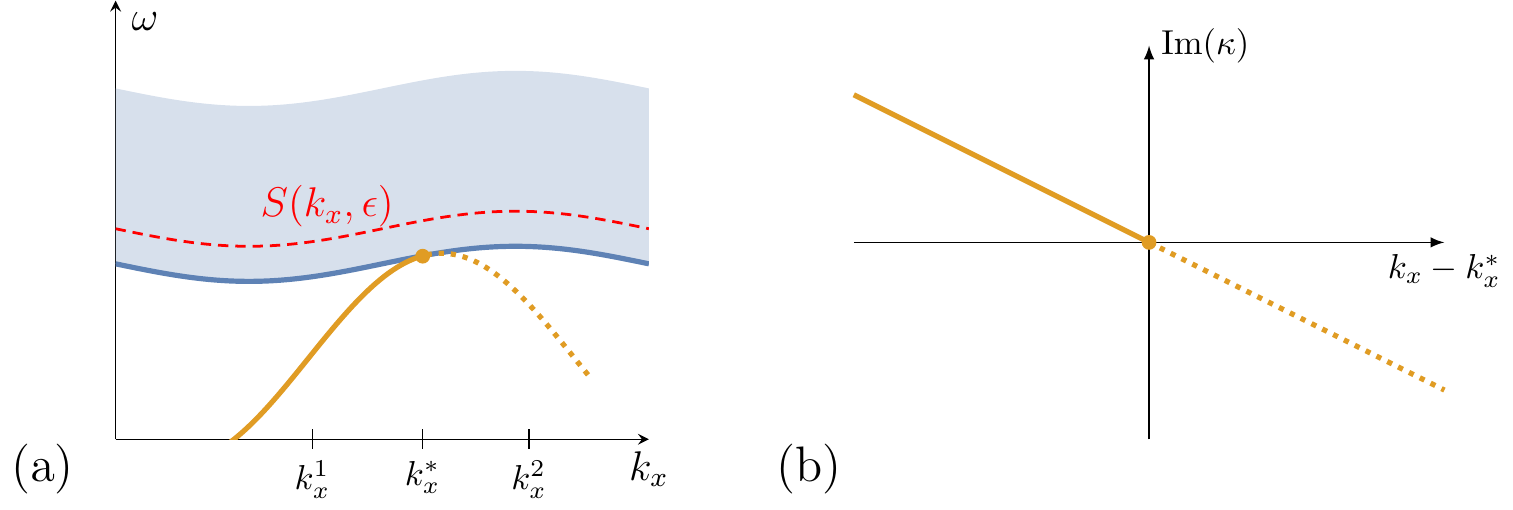}
	\caption{\label{fig:exmp_levinson}Relative Levinson's theorem (a) Near the bottom of the bulk continuum (dashed red and shaded blue), $\arg(S)$ changes by $2\pi$ around the merging point $k_x^*$ of an edge mode branch (plain yellow). The latter becomes a branch of divergent states that is not part of the edge spectrum (dotted yellow) (b). Locally, such branches can be inferred from the poles of $S$ up to analytic continuation in $\kappa$. These poles correspond to bound states for $\mathrm{Im}(\kappa) >0$ only.}
\end{figure}

Let $k_x \in \mathbb R$ be fixed and remove it for a while. We also drop the singularity $\zeta$ and assume that the sections are regular in the region of interest. Up to a multiplication of \eqref{eq:scat_explicit} by $g_\zeta \equiv g$, an equivalent scattering state is
\begin{equation}\label{eq:alternative_psi_scat}
	\phi_s = g(\kappa) \widehat \psi(-\kappa) \ee^{-\ii \kappa y} - g(-\kappa) \widehat \psi(\kappa) \ee^{\ii \kappa y} - h(\kappa) \widehat \psi(\kappa_\mathrm{ev}) \ee^{\ii \kappa_\mathrm{ev} y},
\end{equation}
see \eqref{eq:S_vs_g} and \eqref{eq:def_g_zeta}. So far we focused on $\kappa >0$, but it turns out that a neighborhood of the bulk region in the edge spectrum, and in particular edge mode branches, can be studied through the complex continuation of $\kappa$ in the scattering state. Indeed, assume that $g(\kappa) = 0$ for $\mathrm{Im}(\kappa)>0$. The first term in \eqref{eq:alternative_psi_scat} vanishes whereas the second is exponentially decaying in $y$, similarly to the third one which is evanescent. In that case $\phi_s$ is a bound state of the edge spectrum and corresponds to a point of the edge mode branch below the bulk region. 

Then, as $k_x$ varies, the zero of $g$ might move from $\mathrm{Im}(\kappa)>0$ to $\mathrm{Im}(\kappa)<0$. In the latter case the second term in \eqref{eq:alternative_psi_scat} would be exponentially diverging in $y$ and $\phi_s$ would not be a bound state anymore. Thus at $k_x$ where $\mathrm{Im}(\kappa) =0$ the edge mode branch merges with the bulk continuum. Furthermore, a zero of $g$ is a pole for $S$, and such a sign change in $\mathrm{Im}(\kappa)$ induces a $2\pi$ shift in the argument of $S$. This relative version of Levinson's theorem is proved in a general framework in \cite{GrafPorta13} via the argument's principle. Here we illustrate it on a canonical form of $S$: assume that $g = k_x - \ii \kappa$, so that $g=0$ for $\kappa_0 = -\ii k_x$. For $k_x<0$ one has a bound state, which vanishes for $k_x>0$. This means that an edge mode branch has merged with the continuum at $k_x^*=0$. The $S$ matrix reads
\begin{equation}
	S(k_x,\kappa) = -\dfrac{k_x + \ii \kappa}{k_x - \ii \kappa}
\end{equation}
Thus, for $\kappa = \epsilon >0$,  $\arg S(k_x,\epsilon)$ is shifted by $-2\pi$ as $k_x$ goes from $-\infty$ to $\infty$. Equivalently, the argument of $S$ between some $k_x<0$ and $k_x>0$ finite is $-2\pi$ when $\epsilon \rightarrow 0$. 

\subsection{Failure of Levinson's theorem at infinity \label{sec:Levinson_violation}}
Let $\lambda_0$ be small enough such that \eqref{eq:Relative_Levinson} is true. In that case the only possible singularity for the scattering amplitude along $\mathcal C_{\epsilon,\lambda_0}$ is at infinity. So far, we have worked with $S_\zeta$ with $\zeta = \ii \zeta_y$ and $\zeta_y>0$. In order to study the neighborhood of $\infty$, it is rather convenient to work with $\zeta=0$ instead. This has no influence on \eqref{eq:Levinson_violation}. Indeed, the two are related by
\begin{equation}
	S_\zeta(k_x,\kappa) = \dfrac{t^\zeta_\infty(k_x,-\kappa)}{t^\zeta_\infty(k_x,\kappa)} \dfrac{t^0_\infty(k_x,\kappa)}{t^0_\infty(k_x,-\kappa)} S_0(k_x,\kappa),
\end{equation}
and such a transition function is regular at $\infty$, so that
\begin{equation}
	\lim\limits_{\epsilon \to 0} \dfrac{1}{2\pi \ii}  \int_{\mathcal C_{\epsilon,\lambda_0}} S_\zeta^{-1} \dd S_\zeta = \lim\limits_{\epsilon \to 0} \dfrac{1}{2\pi \ii}  \int_{\mathcal C_{\epsilon,\lambda_0}} S_0^{-1} \dd S_0.
\end{equation}
The scattering state \eqref{eq:scat_explicit} reads $\widetilde \psi_\mathrm{s}^0 = \psi_\mathrm{in}^0 + S_0 \psi_\mathrm{out}^0 + T_0 \psi_\mathrm{ev}^\infty$. It involves
\begin{equation}\label{eq:section_0}
	\widehat \psi^0(\bvec{k}) = \dfrac{1}{\sqrt{2}}\dfrac{1}{k_x+ \ii k_y} \begin{pmatrix}
		\bvec{k}^2/\omega_+ \\ k_x-\ii k_y q \\ k_y+ \ii k_x q 
	\end{pmatrix},
\end{equation}
which appears to be dual to $\widehat \psi^\infty$, see \eqref{eq:section_infty}. According to the previous section, the existence of edge modes near $\infty$ is encoded in the poles of $S_0(k_x,\kappa)$ for $\mathrm{Im}(\kappa)>0$, or equivalently in the zeros of $g_0$. Using \eqref{eq:def_g_zeta}, \eqref{eq:section_infty} and \eqref{eq:section_0} we compute $g_0$ to leading order near $(k_x,\kappa) \rightarrow  \infty$:
\begin{equation}\label{eq:g0_asym}
	g_0(k_x,\kappa) \sim \ii (2 k_x + \ii a (\kappa_\mathrm{ev}(k_x,\kappa)- \kappa)),
\end{equation}
with $\kappa_\mathrm{ev}(k_x,\kappa) \sim  \ii \sqrt{2 k_x^2 + \kappa^2}$, see \eqref{eq:def_kappa_ev}. 

\paragraph{Winding number at infinity.} In terms of the dual variables \eqref{eq:dual_variables} that parametrize $\mathcal C_{\epsilon,\lambda_0}$ one has for $\lambda_x$ near $0$
\begin{equation}
	g_0(k_x,\kappa) \sim \dfrac{\ii}{\lambda_x^2 + \epsilon^2} (2 \lambda_x+ a \sqrt{2\lambda_x^2 + \epsilon^2} - \ii a \epsilon)
\end{equation}
By \eqref{eq:S_vs_g}, the winding number of $S_0$ is, up to a sign, twice the one of $g_0$. Such a winding when $\epsilon \to 0$ and $\lambda_x$ finite can be inferred by the winding  from $\lambda_x = -\infty$ to $\lambda_x =+\infty$ with $\epsilon$ finite. The prefactor on the right hand side can be ignored. As for the real part of the rest,
\begin{equation}
	2 \lambda_x+ a \sqrt{2\lambda_x^2 + \epsilon^2} \to \left\lbrace \begin{array}{ll}
		\sqrt{2}(\sqrt{2}+a)\cdot(+\infty),& \lambda_x \to + \infty \\
		\sqrt{2}(\sqrt{2}-a)\cdot(-\infty),& \lambda_x \to - \infty.
	\end{array}\right.
\end{equation}
Thus, for $|a|<\sqrt{2}$, this rest covers the whole real line as $\lambda_x$ does. We recall that $\mathcal C_{\epsilon}$ is parametrized with reverse orientation for $\lambda_x$. Hence $g_0$ winds along $\mathcal C_{\epsilon,\lambda_0}$ going in counter-clockwise direction by $-\mathrm{sign}(a) \pi$, and $S_0$ winds by $\mathrm{sign}(a)2\pi$. For $|a|>\sqrt{2}$, it does not cover the whole line so $g_0$ and $S_0$ do not wind. This proves \eqref{eq:Levinson_violation}. In particular we deduce that the transitions occur at $a = \pm \sqrt{2},\,0$. Notice that, together with \eqref{eq:BSC} and \eqref{eq:Relative_Levinson}, we infer the value of $n_\mathrm{b}$ claimed in Prop.~\ref{prop:nb_finite_kx}.

\paragraph{Analytic continuation.} According to Thm. \ref{sec:relative_Levinson}, the existence of edge modes near the bulk continuum is related to the zeros of $g_0(k_x, \kappa)$ for $\mathrm{Im}(\kappa) >0$. By inspection of \eqref{eq:g0_asym}, it seems that $\kappa = \ii c k_x$ for some $c \in \mathbb R$ could lead to zeros of $g_0$ in its asymptotic form near $\infty$. This expression involves a square-root through $\kappa_\mathrm{ev}$. It is not clear, though, which branch should be taken when $\kappa$ becomes complex. Moreover a naive computation leads to $\kappa_\mathrm{ev} \sim - \ii \tilde c |k_x|$ when $\kappa = \ii c k_x$, which suggests that $\kappa$ and $\kappa_\mathrm{ev}$ play a dual role near infinity. Thus, in order to take into account all the possible edge modes, we look for zeros of 
\begin{equation}\label{eq:defG0}
	G_0(k_x,k_{y+},k_{y-}) = \ii (2 k_x + \ii a (k_{y-} - k_{y+}))
\end{equation}
with
$
k_x^2 + k_{y\pm}^2 = X_\pm
$,
where $X_\pm$ are the solutions of $\omega^2 = X + (f-\nu X)^2$ for a given $\omega>0$. In particular, for $k_{y+} = -\sqrt{X_+-k_x^2}:=\kappa$ then $k_{y-} = -  \ii \sqrt{k_x^2-X_-} = \kappa_\mathrm{ev}(k_x,\kappa)$ so that $G_0$ and $g_0$ coincide. However the definition of $G_0$ avoids specifying any branch for the square root. Moreover, at leading order in $\omega \rightarrow \infty$, one has 
\begin{equation}\label{eq:def_kypm}
	k_x^2 + k_{y\pm}^2 = \pm \dfrac{|\omega|}{\nu}
\end{equation}
which implies
\begin{equation}\label{eq:cond_kypm}
2k_x^2 + k_{y+}^2 + k_{y-}^2 = 0 \,.
\end{equation}
Together with \eqref{eq:defG0}, the zeros of $G_0$ prompts the ansatz
\begin{equation}\label{eq:ansatz}
	k_{y\pm} = \pm \ii c_{\pm} k_x
\end{equation}
for $c_\pm \in \mathbb R$. The different choice of sign for $k_{y\pm}$ is conventional but allows for a symmetry between $c_+$ and $c_-$. Thus, a zero of $G_0$ implies
\begin{subequations}\label{eq:sol_c_pm}
	\begin{align}
		&2  + a (c_+ + c_-) = 0\,, \\
		&c_+^2 + c_-^2 = 2\,. 
	\end{align}
\end{subequations}
The solution of this system is represented diagrammatically in Fig.~\ref{fig:diagram} as the intersection between a circle and a straight line that depends on $a$. Moreover, due to \eqref{eq:def_kypm}, 
\begin{equation}
k_x^2 (1 - c_{\pm}^2) = \pm \dfrac{|\omega|}{\nu},
\end{equation}
so that $c_+^2<1$ and $c_-^2>1$. Thus with the ansatz \eqref{eq:ansatz} there is a unique pair $(c_+,c_-)$ corresponding to zeros of $G_0$, and hence of $g_0$. Finally, such zeros are associated to a bound state only if $\mathrm{Im}(k_{y\pm})>0$. This requires $c_+>0$, $c_-<0$ for $k_x \to  \infty$ and $c_+<0$, $c_->0$ for $k_x\to -\infty$. Consequently, according to Fig.~\ref{fig:diagram}, there exists an edge mode in the neighborhood of $\infty$ for $a > \sqrt{2}$ and $k_x \to \infty$, or $a < -\sqrt{2}$ and $k_x \to -\infty$ and there is no edge mode otherwise, as claimed below eq.~ \eqref{eq:Levinson_violation} in Thm.~\ref{thm:main}.
 
 \begin{figure}[htb]\centering
 	\includegraphics[scale=1]{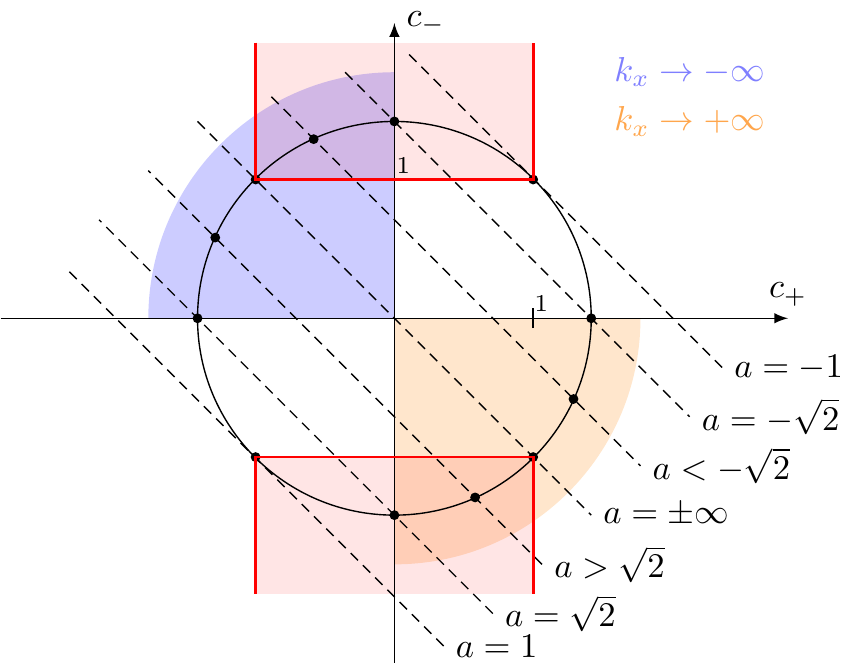}
 	\caption{\label{fig:diagram} $(c_+,c_-)$-diagram corresponding to a zero of $G_0$ near $\infty$ through the ansatz \eqref{eq:ansatz}. The solution is at the intersection between the circle and a straight line of slope $-1$ and depending on $a$. It is actually also restricted inside the red areas, leading to a single pair $(c_+,c_-)$ for each $a$.
 	Such a solution corresponds to a bound state only in the upper left or lower right quadrant, depending on the sign of $k_x$. } 
 \end{figure}

\paragraph{Two alternatives to Levinson scenario.}  Beyond the proof of Thm.~\ref{thm:main} we provide an interpretation of the mismatch when compared with Levinson 's scenario described in Sect.~\ref{sec:relative_Levinson}. We focus on what happens near the transition $a=\sqrt{2}$ for clarity. A zero of $G_0$ (and hence of $g_0$) with the ansatz \eqref{eq:ansatz} corresponds to a bound state only if both $\mathrm{Im}(k_{y+})>0$ and $\mathrm{Im}(k_{y-})>0$. In the usual Levinson's scenario of Thm.~\ref{thm:GP}, one always has $k_{y-} = \ii |\kappa_\mathrm{ev}|$, so that the nature of the state only depends on $\mathrm{Im}(k_{y+}) = \mathrm{Im}(\kappa)$, as illustrated in Fig.~\ref{fig:exmp_levinson}(b). Near $\infty$ instead, this is not the case, due to the particular structure of $g_0$ there. Two alternatives occur:
\begin{enumerate}
	\item For $a<\sqrt{2}$, one has $-1<c_+<0$ and $c_-<-1$ so that $\mathrm{Im}(k_{y+})>0$ and $\mathrm{Im}(k_{y-})<0$ for $k_x<0$, and conversely for $k_x>0$. Thus, the scattering state is nowhere bounded because it always contains some divergent part. Yet, the winding of $S_0$ is $2\pi$. 
	\item For $a>\sqrt{2}$, one has $0<c_+<1$ and $c_-<-1$ so that $\mathrm{Im}(k_{y+})<0$ and $\mathrm{Im}(k_{y-})<0$ for $k_x<0$, and conversely for $k_x>0$. Thus, the scattering state is divergent for $k_x<0$ and a bound state for $k_x>0$. A bound state emerges at $\infty$, and yet the winding of $S_0$ is $0$. 
\end{enumerate}
The two scenarios are illustrated in Fig.~\ref{fig:alternative_Levinson}. We use the dual variables \eqref{eq:dual_variables} so that $k_x\to\pm\infty$ is replaced by $\lambda_x \to 0^\pm$ which makes the comparison with Fig.~\ref{fig:exmp_levinson}(b) easier. Notice that the ansatz \eqref{eq:ansatz} becomes $\lambda_{y\pm} = \mp \ii c_{\pm} \lambda_x$.

 \begin{figure}[htb]\centering
	\includegraphics[scale=1]{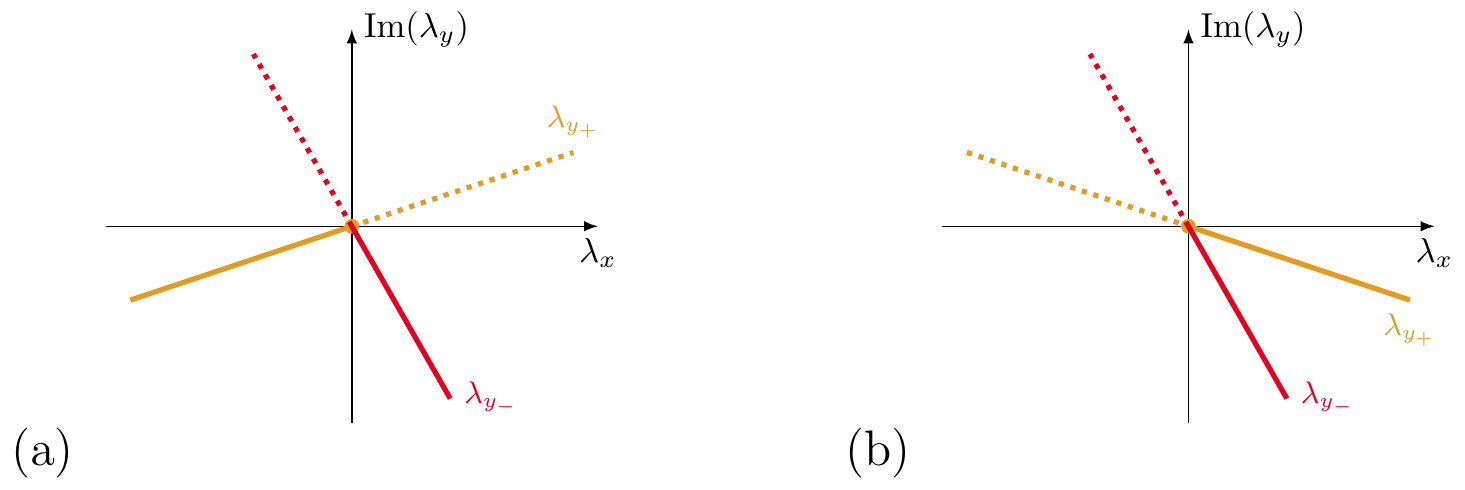}
	\caption{\label{fig:alternative_Levinson} Two alternatives to Levinson scenario at $k_x \to \pm \infty$: zeros of $G_0$ with ansatz \eqref{eq:ansatz} in terms of the dual variables. A plain (resp. dotted) line corresponds to an evanescent (resp. divergent) mode. A bound state is a superposition of two evanescent modes. (a) In case 1, there is no bound state whereas $S_0$ winds by $2\pi$. (b) In case 2, no bound state exist for $\lambda_x <0$ but one emerges at $\lambda_x=0$, whereas $S_0$ does not wind. In both cases, Levinson's theorem is violated.} 
\end{figure}

\paragraph{Second order computation.} Finally, we provide more details about the edge mode branch that exists near $|k_x|\to \infty$ for $|a|>\sqrt{2}$. As we shall see, this branch actually emerges from the bulk continuum at finite $k_x>0$ ($a>\sqrt{2}$) or $k_x<0$ ($a<\sqrt{2}$), stays close to it when $k_x \to + \infty$ or $k_x\to-\infty$, respectively,  and disappears there (cf. Fig.~\ref{fig:edge_spectrum}).  

To compute the second order correction to the result obtained we use again \eqref{eq:def_g_zeta}, \eqref{eq:section_infty} and \eqref{eq:section_0} and find
\begin{equation}\label{eq:defG1}
	G=G_0+G_1+\cdots\,,
\end{equation}
with $G_0$ as in~\eqref{eq:defG0} and 
\begin{equation}
	G_1=\frac{1}{2\nu\omega} \frac{\ii a k_x}{(k_x-\ii k_{y_-})(k_x+\ii k_{y_+})}(k_{y_+}^2-k_{y_-}^2)\,.
\end{equation}
We extend the ansatz~\eqref{eq:ansatz} by a term of the appropriate order in $k_x$
\begin{equation}\label{eq:ansatz_second_order}
	k_{y_\pm}=\pm\ii(c_\pm k_x+d_\pm k_x^{-1})\,,
\end{equation}
and obtain
\begin{equation}
	G_0(k_x,k_{y_+},k_{y_-}) = G_{00}+\ii a(d_++d_-)k_x^{-1}
\end{equation}
where $G_{00}$ is the same expression as in~\eqref{eq:defG0} with $k_{y_\pm}$ at leading order, and hence $G_{00}=0$ for $c_\pm$ zeros of $G_0$ at first order (cf.~\eqref{eq:sol_c_pm}). Moreover, 
\begin{equation}
	G_1=\frac{1}{2\nu\omega} \frac{\ii a k_x}{(1-c_-)(1-c_+)}(c_-^2-c_+^2)\,.
\end{equation}
As here the leading order in $\omega$ suffices (cf.~\eqref{eq:def_kypm}) we find for the solutions $G=0$
\begin{equation}\label{eq:d_pm_cond_1}
	d_++d_-=-\frac{1}{\nu^2 (1-c_-)(1-c_+)}\,.
\end{equation}
Furthermore, we have to amend~\eqref{eq:cond_kypm} to
\begin{equation}
	2k_x^2+k_{y_+}^2+k_{y_-}^2=-\frac{1-2\nu f}{\nu^2}\,,
\end{equation}
which can be seen by including the term of next order in~\eqref{eq:def_kypm}. Plugging in the ansatz together with~\eqref{eq:sol_c_pm} we find that $d_\pm$ are determined by~\eqref{eq:d_pm_cond_1} and
\begin{equation}\label{eq:d_pm_cond_2}
	c_+d_++c_-d_- = \frac{1-2\nu f}{2\nu^2}\,.
\end{equation}
With these results we will focus on the transition at $a=\sqrt{2}$ and explain the emergence of an edge mode branch for $a>\sqrt{2}$. The case of negative values of $a$ works analogously. From Fig.~\ref{fig:diagram} and its explanation we see that the first order solution for $a=\sqrt{2}$ is given by $(c_+,c_-)=(0,-\sqrt{2})$ and plugging this into (\ref{eq:d_pm_cond_1},~\ref{eq:d_pm_cond_2}) we find
\begin{gather}
	d_+ +d_- =\nu^{-2}(1-\sqrt{2})\,,\\
	d_-=-\nu^{-2}\frac{1-2\nu f}{2\sqrt{2}}\,,
\end{gather}
and hence 
\begin{equation}
	d_+=\frac{1}{2\sqrt{2}\nu^2}(2\sqrt{2}-3-2\nu f)\approx \frac{2\sqrt{2}-3}{2\sqrt{2}\nu^2}\,,
\end{equation}
neglecting terms of order $1/\nu$ in the last step. A further incipient state (besides of $k_x=-\infty$) occurs when $\operatorname{Im} k_{y,\sigma}=0$ for $\sigma=+$ or $\sigma=-$. Thus by the ansatz~\eqref{eq:ansatz_second_order} the imaginary part of $k_{y_\pm}$ changes sign at the values
\begin{equation}\label{eq:second_order_sol}
	k_x^2=-\frac{d_\pm}{c_\pm}\,.
\end{equation}
Since $c_-\neq0$ at $a=\sqrt{2}$, the solution in the $-$ case is likely outside the range of validity of the expansion. But for $a\searrow\sqrt{2}$ we have $c_+\searrow 0$ (see Fig.~\ref{fig:diagram}). Thus and by $d_+<0$ there is a solution $k_x$ of~\eqref{eq:second_order_sol} with $k_x\rightarrow \infty$ in agreement with Fig.~\ref{fig:edge_spectrum}.

\appendix

\section{\texorpdfstring{Chern number for spin $s$ representations}%
                               {Chern number for spin s representations}\label{app:chern_spin}}

It suffices to prove Proposition \ref{prop:chern_numbers} in the case where $M=S^2$ and we do so by induction in $s$ in steps of $1/2$, starting with $s=0$ and $s=1/2$. In the first case the bundle is trivial, $S^2\times \mathbb{C}^2$, whence $C(P_{0,0})=0$, where the eigenprojection on the band with labels $(s,m)$ is denoted by $P_{s,m}$. In the case $s=1/2$, the integrand of \eqref{eq:chern_tr} is
\begin{equation}
	\tr(P_\pm[dP_\pm ,dP_\pm]) = \pm \frac{\ii}{2}\vec{e}\cdot (d\vec{e}\wedge d\vec{e}\,)=\pm \frac{\ii}{2}w
\end{equation}
with $P_\pm = P_{\frac{1}{2},\pm\frac{1}{2}}$. This result follows from
\begin{gather}
	P_\pm = \frac{1}{2}(1\pm\vec{e}\cdot\vec{\sigma})\,,\\
	\tr\,\vec{a}\cdot\vec{\sigma}\,\big[\vec{b}\cdot\vec{\sigma},\vec{c}\cdot\vec{\sigma}\big] = 4\ii\vec{a}\cdot(\vec{b}\wedge\vec{c}\,)\,,
\end{gather}
where $\vec{\sigma}=(\sigma_1,\,\sigma_2,\,\sigma_3)^T$ denotes the vector of Pauli matrices $\sigma_i$. This leads in turn to $C(P_\pm) = \pm 1$ by $\int_{S^2}w = 4\pi$.

We next assume the claim to be true up to $s$ and prove it for $s+1/2$. Let $\mathcal{D}_s$ be the irreducible representation of $SU(2)$ of spin $s$, equipped with the standard basis  ${\ket{s,m}}_{m=-s}^s$ with respect to the quantization axis $\vec{e}$, i.e., $\vec{S}\cdot\vec{e}\,\ket{s,m}=m\ket{s,m}$. We then have by the Clebsch-Gordan series
\begin{equation}\label{eq:repr_sum}
	\mathcal{D}_{s+\frac{1}{2}}\oplus\mathcal{D}_{s-\frac{1}{2}} = \mathcal{D}_s\otimes\mathcal{D}_{\frac{1}{2}}\,.
\end{equation}
We first treat the case $m=s+1/2$, for which we find
\begin{equation}
	\ket{s+\frac{1}{2},s+\frac{1}{2}}=\ket{s,s>\otimes\,|\frac{1}{2},\frac{1}{2}},
\end{equation}
whence
\begin{equation}
	P_{s+\frac{1}{2},s+\frac{1}{2}}=P_{s,s}\otimes P_{\frac{1}{2},\frac{1}{2}}\,.
\end{equation}
Since the vector bundles are line bundles, the Chern number is additive, meaning
\begin{equation}
	C(P_{s+\frac{1}{2},s+\frac{1}{2}})=C(P_{s,s})+C(P_{\frac{1}{2},\frac{1}{2}})=2s+1 \,,
\end{equation}
as claimed. The case $m=-(s+1/2)$ is similar. Finally, we consider the intermediate cases $m=-(s-1/2),...,s-1/2$. The eigenspace of (total) $\vec{S}\cdot\vec{e}$ acting on \eqref{eq:repr_sum} for eigenvalue $m$ has dimension $2$ and can be represented as  a span in two ways:
\begin{equation}
	\Bigg[\ket{s,m-\frac{1}{2}}\otimes\,\ket{\frac{1}{2},\frac{1}{2}},\, \ket{s,m+\frac{1}{2}}\otimes\,\ket{\frac{1}{2},-\frac{1}{2}}\Bigg] = \Bigg[\ket{s-\frac{1}{2},m},\,\ket{s+\frac{1}{2},m}\Bigg]\,.
\end{equation}
The bundle over $S^2\ni\vec{e}$ having the eigenspaces as fibers is thus
\begin{equation}
	\Big(P_{s,m-\frac{1}{2}}\otimes P_{\frac{1}{2},\frac{1}{2}}\Big)\oplus\Big(P_{s,m+\frac{1}{2}}\otimes P_{\frac{1}{2},-\frac{1}{2}}\Big)=P_{s-\frac{1}{2},m}\oplus P_{s+\frac{1}{2},m}\,.
\end{equation}
Arguing as before we get for $c_{s,m}:=C(P_{s,m})$
\begin{equation}
	(c_{s,m-\frac{1}{2}}+c_{\frac{1}{2},\frac{1}{2}})+(c_{s,m+\frac{1}{2}}+c_{\frac{1}{2},-\frac{1}{2}})=c_{s-\frac{1}{2},m}+c_{s+\frac{1}{2},m}\,,
\end{equation}
i.e.
\begin{equation}
	((2m-1)+1)+(2m+1-1) = 2m + c_{s+\frac{1}{2},m}
\end{equation}
by induction assumption. Thus $c_{s+\frac{1}{2},m}=2m$ which proves Proposition~\ref{prop:chern_numbers}.

\section{Self-adjoint boundary conditions\label{app:selfadjoint_bc}}

In this appendix we will characterize self-adjoint boundary conditions for our model with domain $\left\{(x,y)\,|\, y \geq 0\right\}\subset\mathbb{R}^2$. The Hamiltonian is after Fourier transformation along $x$ with conjugate variable $k_x$ (translation invariance in $x$-direction) given by (cf. \eqref{eq:ShallowWater_edge})
\begin{equation}
	H^\sharp(k_x)=\begin{pmatrix}
		0 & k_x & -\ii\partial_y\\
		k_x & 0 & -\ii (f-\nu (k_x^2-\partial_y^2))\\
		-\ii\partial_y & \ii (f-\nu (k_x^2-\partial_y^2)) & 0
	\end{pmatrix}\,.
\end{equation}
In this appendix we will drop the $\tilde{\cdot}$ as compared to the main text and denote the states by
\begin{equation}\label{eq:psi_vector}
	\psi = \psi(k_x;y) =\begin{pmatrix}
	\eta\\
	u\\
	v
	\end{pmatrix}\,.
\end{equation}
\begin{lem}\label{lem:self_adj_subspaces}
	Self-adjoint realizations of the Hamiltonian $H$ correspond to subspaces $M\subset \mathbb{C}^6$ with 
	\begin{equation}\label{eq:cond_omega}
		\Omega M = M^\perp \,,
	\end{equation}
	where 
	\begin{equation}\label{eq:Omega}
		\Omega=\begin{pmatrix}
			0 & 0 & -1 & 0 & 0 & 0\\
			0 & 0 & 0 & 0 & 0 & -\nu\\
			-1 & 0 & 0 & 0 & \nu & 0\\
			0 & 0 & 0 & 0 & 0 & 0\\
			0 & 0 & \nu & 0 & 0 & 0\\
			0 & -\nu & 0 & 0 & 0 & 0
		\end{pmatrix}\,.
	\end{equation}
	The correspondence is in terms of $\psi\in H\oplus H^2\oplus H^2$:
	\begin{equation}
		\psi \in \mathcal{D}(H)\longleftrightarrow \Psi\in M\,,
	\end{equation}
	where the (stacked) column vector
	\begin{equation}\label{eq:stacked_v}
		\Psi=\begin{pmatrix}
			\psi\\
			\psi'
		\end{pmatrix}
	\end{equation}
	stands for the boundary values at $y=0$.
\end{lem}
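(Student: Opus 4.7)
The plan is to reduce the lemma to a boundary-form computation and then invoke the standard symplectic criterion for self-adjoint extensions of symmetric operators defined by boundary conditions.

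First, I would compute the boundary form
\begin{equation*}
 B(\phi,\psi):=\langle\phi,H^\sharp\psi\rangle-\langle H^\sharp\phi,\psi\rangle
\end{equation*}
for $\phi,\psi\in H^1\oplus H^2\oplus H^2$ by integrating by parts on $y\in[0,\infty)$. The purely multiplicative terms (in $k_x$, $f$, and $\nu k_x^2$) cancel; the off-diagonal first-order entries $-\ii\partial_y$ of $H^\sharp$ contribute $\ii(\bar\phi_1\psi_3+\bar\phi_3\psi_1)|_{y=0}$; and a double integration by parts on the $\mp\ii\nu\partial_y^2$ entries produces $\ii\nu(\bar\phi_2\partial_y\psi_3-\partial_y\bar\phi_2\,\psi_3-\bar\phi_3\partial_y\psi_2+\partial_y\bar\phi_3\,\psi_2)|_{y=0}$. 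Collecting these into the stacked vector $\Psi$ of~\eqref{eq:stacked_v} and reading off the coefficient matrix recovers exactly $B(\phi,\psi)=-\ii\langle\Phi,\Omega\Psi\rangle_{\mathbb{C}^6}$ with the Hermitian $\Omega$ of~\eqref{eq:Omega}.

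Second, I would turn this identity into the correspondence. Restricting $H^\sharp$ to the $\psi$ whose boundary trace $\Psi$ lies in a subspace $M\subset\mathbb{C}^6$ yields a symmetric operator iff $B$ vanishes on $M\times M$, i.e.\ $\Omega M\subset M^\perp$. The adjoint is characterized by those $\phi$ for which $B(\phi,\psi)=0$ for all admissible $\psi$, hence by $\Phi\in(\Omega M)^\perp$. Self-adjointness is therefore exactly the identification $M=(\Omega M)^\perp$, which after taking orthogonal complements is the stated condition $\Omega M=M^\perp$.

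A delicate point is the rank deficiency of $\Omega$: its kernel is two-dimensional, reflecting on one hand that $\partial_y\eta(0)$ is not meaningful for $\eta\in H^1$ (whence the vanishing fourth row and column) and on the other that one further combination of boundary data is invisible to $B$. The condition $\Omega M=M^\perp$ forces $M^\perp\subset\mathrm{range}(\Omega)=(\ker\Omega)^\perp$, hence $\ker\Omega\subset M$, and a dimension count then pins $\dim M=4$—consistent with the two boundary conditions expected for a three-component system with one first-order and two second-order equations. To close the correspondence I would verify surjectivity: given any such $M$, a cutoff-and-lift construction produces $\psi\in H^1\oplus H^2\oplus H^2$ attaining any prescribed admissible boundary data in $M$, ensuring the resulting operator is densely defined. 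The main obstacle I anticipate is disciplined bookkeeping through the two integrations by parts in the $\partial_y^2$ blocks, where the opposite signs between $H^\sharp_{23}$ and $H^\sharp_{32}$ must reproduce the precise sign pattern of $\Omega$; the remainder is a direct application of the standard symplectic theory of boundary triples.
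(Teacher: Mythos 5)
Your proposal is correct and follows essentially the same route as the paper: integrate by parts to express the boundary form as $\tilde\Psi^*\Omega\Psi$ on the stacked trace vectors, then identify symmetry with $\Omega M\subset M^\perp$ and self-adjointness with $M=(\Omega M)^\perp$, i.e.\ $\Omega M=M^\perp$ (your overall sign convention for $\Omega$ differs harmlessly, since the condition is invariant under $\Omega\mapsto-\Omega$). Your added remarks on $\ker\Omega$ and $\dim M=4$ anticipate the paper's subsequent Lemma~\ref{lem:equivalent_boundary_cond}, and your point about lifting prescribed boundary data is a genuine (standard) detail the paper leaves implicit.
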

\begin{proof}
	Without yet imposing any boundary condition, the fact that $\psi\in H\oplus H^2\oplus H^2$ implies that $\eta$, $u'$, $v'$ are continuous and vanish at infinity. A partial integration of $\braket{\tilde{\psi}}{H^\sharp\psi}$ thus yields boundary terms at $y=0$ only:
	\begin{align}
		-\ii \left(\braket{\tilde{\psi}}{H^\sharp\psi}-\braket{H^\sharp\tilde{\psi}}{\psi}\right) &= -(\bar{\tilde{\eta}}v + \bar{\tilde{v}}\eta) + \nu(\bar{\tilde{v}}u'-\bar{\tilde{v}}'u)-\nu(\bar{\tilde{u}}v'-\bar{\tilde{u}}'v)\nonumber\\
		&=\tilde{\Psi}^*\Omega\Psi\label{eq:boundary_cond_part}
	\end{align}
	with $\Psi$ as in \eqref{eq:stacked_v}, $\Omega$ as in the Lemma to prove and $'=\partial_y$. Let the domain be $\left\{\Psi | \Psi \in M \right\}$, where $M\subset\mathbb{C}^6$ is some subspace. Then $M$ should have the properties
	\begin{align*}
		\tilde{\Psi}^*\Omega\Psi =0\,, \: (\Psi\in M) &\Rightarrow \tilde{\Psi}\in M\,,\\
		\Psi\in M &\Rightarrow \tilde{\Psi}^*\Omega\Psi =0 \:(\forall \tilde{\Psi}\in M)\,.
	\end{align*}
	In fact the first one implies $H^*\subset H$ and the second $H\subset H^*$, whence $H=H^*$ as required. Because of $\Omega^*=\Omega$ the two properties are summarized by
	\begin{equation}
		\tilde{\Psi}^*\Omega\Psi =0\,,\:(\Psi\in M) \iff \tilde{\Psi}\in M\,,
	\end{equation}
	which is in turn equivalent to $(\Omega M)^\perp = M$, i.e. to $\Omega M = M^\perp$.
\end{proof}
We note that $\rk\Omega = 4$ and
\begin{equation}\label{eq:orthog_direct_sum}
	\ker\Omega\oplus\im\Omega = \mathbb{C}^6
\end{equation}
(orthogonal direct sum) by $\Omega=\Omega^*$. Dimensions are $2$ and $4$. Let $\hat{\Omega}$ be a partial left inverse of $\Omega$:
\begin{equation}\label{eq:partial_left_inverse}
	\hat{\Omega}\Omega = P\,,
\end{equation}
where $P$ is the orthogonal projection on $\im\Omega$ associated to \eqref{eq:orthog_direct_sum}. It follows 
\begin{equation}\label{eq:partial_left_inverse_i}
	\Omega\hat{\Omega}v= v ,\quad (v\in\im\Omega)\,.
\end{equation}

\begin{lem}\label{lem:equivalent_boundary_cond}
	Let $M\subset\mathbb{C}^6$. The following are equivalent:
	\begin{enumerate}[a)]
		\item $\Omega M = M^\perp$
		\item There is a subspace $\tilde{M}\subset \mathbb{C}^6$ such that 
			\begin{enumerate}[1)]
				\item $M=\ker\Omega\oplus\tilde{M}$, orthogonal direct sum, (whence $\tilde{M}\subset\im\Omega$),
				\item $\hat{\Omega}\tilde{M}^\perp =\tilde{M}$, where $\tilde{M}^\perp$ is the orthogonal complement of $\tilde{M}$ within $\im\Omega$.
			\end{enumerate}
		\item \begin{enumerate}[1)]
			\item $\ker\Omega\subset M$,
			\item $\dim M =4$,
			\item $\hat{\Omega}^\perp\subset M$.
		\end{enumerate}
	\end{enumerate}
\end{lem}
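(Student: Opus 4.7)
The plan is to prove the circular implications (a) $\Rightarrow$ (b) $\Rightarrow$ (c) $\Rightarrow$ (a), using throughout the orthogonal splitting $\mathbb{C}^6=\ker\Omega\oplus\im\Omega$ from \eqref{eq:orthog_direct_sum} and the fact that, by \eqref{eq:partial_left_inverse} and \eqref{eq:partial_left_inverse_i}, the maps $\Omega|_{\im\Omega}$ and $\hat\Omega|_{\im\Omega}$ are mutual inverse bijections of $\im\Omega$. The natural candidate to bridge (a) and (b) is $\tilde M:=M\cap\im\Omega$, and the key observation underlying (b) $\Leftrightarrow$ (c) is that $\ker\Omega\subset M$ turns the orthogonal complement $M^\perp$ (computed in $\mathbb{C}^6$) into the orthogonal complement of $\tilde M$ inside $\im\Omega$ (which is exactly the $\tilde M^\perp$ of (b)).

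For (a) $\Rightarrow$ (b), I first note that $\Omega M\subset\im\Omega$, so the hypothesis $\Omega M=M^\perp$ forces $M^\perp\subset\im\Omega$ and hence, on taking complements, $\ker\Omega\subset M$. Setting $\tilde M:=M\cap\im\Omega$, the inclusion $\ker\Omega\subset M$ together with the orthogonal decomposition yields $M=\ker\Omega\oplus\tilde M$, proving (b.1). Because $M$ contains all of $\ker\Omega$, its complement $M^\perp$ lies in $\im\Omega$ and coincides with $\tilde M^\perp$ (in the sense of (b)). Since $\Omega$ annihilates $\ker\Omega$, the identity $\Omega M=M^\perp$ reduces to $\Omega\tilde M=\tilde M^\perp$. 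Applying $\hat\Omega$ on the left and invoking \eqref{eq:partial_left_inverse} on $\tilde M\subset\im\Omega$ gives $\tilde M=\hat\Omega\tilde M^\perp$, which is (b.2). The converse (b) $\Rightarrow$ (a) runs the same steps backwards, replacing \eqref{eq:partial_left_inverse} with \eqref{eq:partial_left_inverse_i}.

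For (b) $\Leftrightarrow$ (c), the main lever is that $\hat\Omega$ is a bijection on $\im\Omega$, so condition (b.2) forces $\dim\tilde M=\dim\tilde M^\perp=\tfrac12\dim\im\Omega=2$; together with $\dim\ker\Omega=2$ this yields (c.2). Conditions (c.1) and (c.3) then follow: $\ker\Omega\subset M$ is part of (b.1), and since $M^\perp$ equals the $\tilde M^\perp$ of (b), one has $\hat\Omega M^\perp=\hat\Omega\tilde M^\perp=\tilde M\subset M$. Conversely, assuming (c), I set $\tilde M:=M\cap\im\Omega$; (c.1) and the orthogonal splitting give $M=\ker\Omega\oplus\tilde M$ and $\dim\tilde M=2$ by (c.2), while $M^\perp=\tilde M^\perp$ (within $\im\Omega$) has the same dimension $2$. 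Since $\hat\Omega|_{\im\Omega}$ is injective, the inclusion $\hat\Omega M^\perp\subset M$ of (c.3) has image inside $\tilde M$ of dimension $2$, so it is an equality $\hat\Omega\tilde M^\perp=\tilde M$, which is (b.2).

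The step I expect to require the most care is the bookkeeping of two different orthogonal complements --- the one taken in $\mathbb{C}^6$ versus the one taken inside $\im\Omega$ --- and checking that the condition $\ker\Omega\subset M$ is precisely what makes these two notions interchangeable. Everything else reduces to dimension counting and the inverse-pair identities \eqref{eq:partial_left_inverse}--\eqref{eq:partial_left_inverse_i}.
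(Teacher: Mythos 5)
Your proof is correct and follows essentially the same route as the paper: the orthogonal splitting $\mathbb{C}^6=\ker\Omega\oplus\im\Omega$, the identification of $M^\perp$ with $\tilde M^\perp$ once $\ker\Omega\subset M$, the bijectivity of $\hat\Omega$ on $\im\Omega$, and dimension counting. The only organizational difference is that you close the loop via $(c)\Rightarrow(b)\Rightarrow(a)$ using injectivity of $\hat\Omega\restriction\im\Omega$, whereas the paper proves $(c)\Rightarrow(a)$ directly with a rank--nullity count for $\Omega:M\to\mathbb{C}^6$; both are equally elementary.
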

\begin{proof}
	\textit{(a)}$\Rightarrow$\textit{(b)}: By \textit{(a)}, $M^\perp\subset\im\Omega$, and thus by \eqref{eq:orthog_direct_sum} $M\supset\ker\Omega$, proving \textit{(b1)}. Next we have $M^\perp=\tilde{M}^\perp$ as we find by \textit{(b1)}
	\begin{align*}
		v\in M^\perp &\iff v\perp \ker\Omega ,\: v\perp \tilde{M}\\
		&\iff v\in \tilde{M}^\perp
	\end{align*}
	because $(\ker\Omega)^\perp=\im\Omega$. With \eqref{eq:partial_left_inverse} we get from \textit{(a)} $PM=\hat{\Omega}M^\perp$, i.e. $\tilde{M}=\hat{\Omega}\tilde{M}^\perp$.
	
	\textit{(b)}$\Rightarrow$\textit{(c)}: First we see directly that \textit{(c1)} follows from \textit{(b1)}. Furthermore, since $\hat{\Omega}$ is regular as a map $\im\Omega\rightarrow\im\Omega$, we have by \textit{(b2)}: $4-\dim\tilde{M}=\dim\tilde{M}$, i.e. $\dim\tilde{M}=2$, proving \textit{(c2)}. Property \textit{(c3) }follows from \textit{(b2)} and $M^\perp=\tilde{M}^\perp$. 
	
	\textit{(c)}$\Rightarrow$\textit{(a)}: By \textit{(c1)}, i.e. $\im\Omega \supset M^\perp$, and \eqref{eq:partial_left_inverse_i} we get from \textit{(c3)}
	\begin{equation}\label{eq:M_perp_sub}
		M^\perp \subset \Omega M\,.
	\end{equation}
	By the rank-nullity theorem applied to $\Omega : M\rightarrow\mathbb{C}^6$, i.e.
	\begin{equation}
		\dim M = \dim\ker (\Omega\restriction M) + \dim \Omega M\,,
	\end{equation}		
	we get $4=2+\dim\Omega M$ by \textit{(c1,c2)}. Hence equality in \eqref{eq:M_perp_sub}.
\end{proof}

\subsection{Boundary conditions in terms of equations}

In the following the self-adjoint boundary conditions will be characterized more explicitly in terms of equations. For that we observe that $\ker\Omega$ is spanned by the columns of the matrix
\begin{equation}\label{eq:matrix_span}
	N=\begin{pmatrix}
		\nu & 0\\
		0 & 0\\
		0 & 0\\
		0 & 1\\
		1 & 0\\
		0 & 0	
	\end{pmatrix}
\end{equation}
The partial left inverse $\hat{\Omega}$ is uniquely determined on $\im\Omega$, but is arbitrary on $\ker\Omega$. For definiteness, let us choose $\hat{\Omega}v=0$, $(v\in\ker\Omega)$. By that we have explicitly
\begin{equation}
	\hat{\Omega} = \begin{pmatrix}
		0 & 0 & -\lambda & 0 & 0 & 0\\
		0 & 0 & 0 & 0 & 0 & - \nu^{-1}\\
		-\lambda  & 0 & 0 & 0 & \lambda \nu & 0\\
		0 & 0 & 0 & 0 & 0 & 0\\
		0 & 0 &  \lambda \nu & 0 & 0 & 0\\
		0 & - \nu^{-1} & 0 & 0 & 0 & 0
	\end{pmatrix}\,,\quad \lambda = \frac{1}{1+\nu^2}\,.
\end{equation}
In fact the so chosen partial left-inverse fulfills $\Omega\hat{\Omega} = P$ and $\hat{\Omega}N =0$ (cf. \eqref{eq:partial_left_inverse}).
\begin{prop}\label{prop:boundary_cond}
	\begin{enumerate}[i)]
		\item Subspaces $M\subset\mathbb{C}^6$ as in Lemma~\ref{lem:self_adj_subspaces} of dimension $4$ are determined by $2\times6$ matrices $A$ of maximal rank, i.e. $\rk A=2$, by means of
		\begin{equation}\label{eq:subspaces_dim4}
			M= \{\Psi\in\mathbb{C}^6 | A\Psi =0 \} = \ker A\,,
		\end{equation}	
		and conversely. Two such matrices $A$, $\tilde{A}$ determine the same subspace if and only if $A=B\tilde{A}$ with $B\in \mathrm{GL}(2)$.	 
		\item Self-adjoint boundary conditions are determined precisely by matrices as in $(i)$ with 
		\begin{equation}\label{eq:self_adjoint_matrix_cond}
			AN=0\,,\qquad A\hat{\Omega}A^*=0
		\end{equation}
	\end{enumerate}
\end{prop}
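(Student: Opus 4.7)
My plan is to treat the two parts separately. Part (i) is a standard fact relating subspaces to their defining equations, and part (ii) is an explicit translation of the three conditions of Lemma~\ref{lem:equivalent_boundary_cond}\,(c) into matrix identities.

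For part (i), I would invoke rank-nullity: any $2\times 6$ matrix $A$ with $\rk A = 2$ has $\dim\ker A = 4$, so \eqref{eq:subspaces_dim4} defines a four-dimensional subspace. Conversely, given a four-dimensional $M\subset\mathbb{C}^6$, I would pick any basis of the two-dimensional $M^\perp$ and form the matrix $A$ whose rows are the (conjugates of the) basis vectors; this $A$ has rank $2$ and $\ker A = (M^\perp)^\perp = M$. Two such matrices $A$ and $\tilde A$ satisfy $\ker A = \ker\tilde A$ iff their row spaces coincide, which happens iff the rows of $A$ differ from those of $\tilde A$ by a change of basis of the common two-dimensional row space, i.e.\ $A = B\tilde A$ with $B\in\mathrm{GL}(2)$.

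For part (ii), I would combine (i) with the criterion of Lemma~\ref{lem:equivalent_boundary_cond}\,(c). Writing $M = \ker A$, condition (c2) ($\dim M = 4$) is automatic. Condition (c1) ($\ker\Omega\subset M$) is equivalent to $AN = 0$, since the columns of $N$ in~\eqref{eq:matrix_span} span $\ker\Omega$. For condition (c3) ($\hat\Omega M^\perp \subset M$), the identification $M^\perp = \ran A^*$ (valid for the standard Hermitian inner product used throughout) gives $\hat\Omega M^\perp = \ran(\hat\Omega A^*)$, and requiring this range to lie in $\ker A$ is equivalent to $A\hat\Omega A^* = 0$. These are exactly the two identities in~\eqref{eq:self_adjoint_matrix_cond}.

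The argument is essentially bookkeeping rather than genuine computation; the one subtlety is keeping track of which orthogonal complement is meant (inside all of $\mathbb{C}^6$ versus inside $\im\Omega$). The former is the correct one for expressing things in terms of $A^*$, and condition (c1) automatically guarantees $M^\perp\subset\im\Omega$, so that $\hat\Omega M^\perp$ is unambiguously defined through~\eqref{eq:partial_left_inverse_i}. Notably, the proof uses only the defining property $\hat\Omega\Omega = P$ and that the columns of $N$ span $\ker\Omega$, so no explicit computation with the matrices $\Omega$, $N$, $\hat\Omega$ is needed, which serves as a useful consistency check on the final form of~\eqref{eq:self_adjoint_matrix_cond}.
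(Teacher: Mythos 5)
Your proposal is correct and follows essentially the same route as the paper: part (i) is the standard kernel/row-space correspondence (the paper likewise only bothers to prove the $A=B\tilde A$ claim, constructing $B$ directly from $\ker A=\ker\tilde A$), and part (ii) translates conditions (c1)--(c3) of Lemma~\ref{lem:equivalent_boundary_cond} via $M^\perp=\ran A^*$ into $AN=0$ and $A\hat\Omega A^*=0$, exactly as in the paper. Your remark that (c1) forces $M^\perp\subset\im\Omega$, so that $\hat\Omega M^\perp$ is well controlled, is a correct and worthwhile clarification of a point the paper leaves implicit.
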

\begin{proof}
	\begin{enumerate}[i)]
		\item Only the last sentence deserves proof, and in fact only the necessity of $A=B\tilde{A}$. For that  consider a map $B:\,\mathbb{C}^2\rightarrow\mathbb{C}^2$ which is well-defined by $Av\mapsto\tilde{A}v$, $v\in\mathbb{C}^6$ because of $\ker A=\ker \tilde{A}$.
		\item By the Lemma \ref{lem:equivalent_boundary_cond} and in particular by the equivalence between \textit{(a)} and \textit{(c)}, $M$ is as in \eqref{eq:subspaces_dim4} by \textit{(c2)}. By \textit{(c1)} and \eqref{eq:matrix_span} the first equation~\eqref{eq:self_adjoint_matrix_cond} applies. Equation~\eqref{eq:subspaces_dim4} states $M^\perp=\ran A^* = \{A^*v|v\in\mathbb{C}^2\}$. Thus by \textit{(c3)},
		\begin{equation}
			\braket{A^*v_1}{\hat{\Omega}A^*v_2}=0\,,\quad (v_1,\,v_2\in\mathbb{C}^2)\,,
		\end{equation}
		i.e. the second equation~\eqref{eq:self_adjoint_matrix_cond}.
	\end{enumerate}
\end{proof}
\begin{exmp}
	Dirichlet boundary conditions $u=0$, $v=0$ correspond to
	\begin{equation}
		A= \begin{pmatrix}
			0 & 1 & 0 & 0 & 0 & 0\\
			0 & 0 & 1 & 0 & 0 & 0
		\end{pmatrix}
	\end{equation}
	which is seen to satisfy \eqref{eq:self_adjoint_matrix_cond}.
\end{exmp}

\subsection{Local boundary conditions}

In this section we will study local boundary conditions which contain the ones studied in the main text of the paper. For that let $\psi(\bvec{x})$, $\bvec{x}=(x,y)\equiv(x_1,\,x_2)$ be as in \eqref{eq:psi_vector} and lets consider local boundary conditions at $x_2=0$ of the form
\begin{equation}
		B_0\psi +B_1\partial_1\psi +B_2\partial_2\psi = 0
\end{equation}
($\partial_i=\partial/\partial x_i$) with $l\times 3$-matrices $B_i$ ($l=2$ suffices). After using translation invariance
\begin{equation}
	\psi(\bvec{x}) = \psi(x_2)e^{\ii k_1x_1}
\end{equation}
they reduce to 
\begin{equation}
	(B_0+\ii k_1B_1)\psi + B_2\psi ' =0\,,
\end{equation}
($'=\partial /\partial x_2$), i.e. to $A\Psi=0$ as in Proposition~\ref{prop:boundary_cond} with
\begin{equation}
	\begin{gathered}
		A=(B_0+\ii k_1B_1,\, B_2) \equiv A_{02}+\ii k_1A_1 \equiv A(k)\\
		A_{02}=(B_0,\, B_2)\,,\quad A_1=(B_1,\, 0)\,.
	\end{gathered}
\end{equation}
\begin{rem}
	Quite generally, the condition $\rk A \geq 2$ is equivalent to $A\wedge A \neq 0$. So $\rk A(k_1) \geq 2$ (and hence $=2$, c.f. Proposition \ref{prop:boundary_cond}) means that at least one among
	\begin{equation*}
		A_{02}\wedge A_{02}\,,\quad A_{02}\wedge A_{1}+A_{1}\wedge A_{02}\,,\quad A_1\wedge A_1
	\end{equation*}
	does not vanish.
\end{rem}
The conditions \eqref{eq:self_adjoint_matrix_cond} now mean
\begin{gather}
	A_{02}N=0\,,\quad A_1N=0,,\\
	A_{02}\hat{\Omega}A_{02}^*=0\,,\quad A_{02}\hat{\Omega}A_{1}^*-A_{1}\hat{\Omega}A_{02}^*=0\,,\quad A_1\hat{\Omega}A_1^*=0\,.
\end{gather}
\begin{exmp}
	 The boundary conditions $v=0$, $\partial_xu+a\partial_yv=0$ correspond to 
	 \begin{equation}
	 	A_{02}=\begin{pmatrix}
	 		0 & 0 & 1 & 0 & 0 & 0\\
	 		0 & 0 & 0 & 0 & 0 & a
	 	\end{pmatrix}\,,\quad A_1=\begin{pmatrix}
	 		0 & 0 & 0 & 0 & 0 & 0\\
	 		0 & 1 & 0 & 0 & 0 & 0
	 	\end{pmatrix}\,,
	 \end{equation}
	 which are seen to fulfill the conditions stated above.
\end{exmp}

\section{Scattering theory for general boundary conditions}

The scattering states and the scattering amplitude can be defined for any self-adjoint boundary condition. In the main text we focused on the variables $(k_x,\kappa)$, so that $\omega=\omega_+(k_x,\kappa)$. Here, we work instead with $(k_x,\omega)$. Asymptotic states are given in terms of plane wave solutions
\begin{equation}\label{eq:plane_wave_solutions}
	\psi(x,t) = \hat{\psi} e^{\ii (k_xx+k_yy-\omega t)}\,.
\end{equation}
Let $k_x$ and $\omega >\sqrt{k_x^2+(f-\nu k_x^2)^2}$ be given. There are two solutions $X_\pm\equiv \bvec{k}^2$ of
\begin{equation*}
	\omega^2=X+(f-\nu X)^2\,;
\end{equation*}
they have $\pm X_\pm >0$ because of $\nu^2 X_+X_-=f^2-\omega^2<0$; and hence $4$ solutions $k_y$ of $X=k_x^2+k_y^2$, namely two real ones, incoming ($\kappa_\mathrm{in}<0)$, outgoing ($\kappa_\mathrm{out}=-\kappa_\mathrm{in}>0$); and two imaginary ones, decaying ($\kappa_\mathrm{ev},\, \ii \kappa_\mathrm{ev} <0$) and diverging ($\kappa_\mathrm{div}=-\kappa_\mathrm{ev},\, \ii \kappa_\mathrm{div}>0$). The first three are, up to multiples, cf.~\eqref{eq:section_infty},
\begin{equation}\label{eq:solutions_xpm}
	\widehat{\psi}_\mathrm{in} = \begin{pmatrix}
		(k_x^2+\kappa_\mathrm{in}^2)/\omega\\
		k_x-\ii \kappa_\mathrm{in} q_+\\
		\kappa_\mathrm{in}+\ii k_x q_+
	\end{pmatrix}\,,\quad\widehat{\psi}_\mathrm{out} = \begin{pmatrix}
		(k_x^2+\kappa_\mathrm{out}^2)/\omega\\
		k_x-\ii \kappa_\mathrm{out} q_+\\
		\kappa_\mathrm{out}+\ii k_x q_+
	\end{pmatrix}\,,\quad \widehat{\psi}_\mathrm{ev} = \begin{pmatrix}
		(k_x^2+\kappa_\mathrm{ev}^2)/\omega\\
		k_x- \ii\kappa_\mathrm{ev} q_-\\
		 \kappa_\mathrm{ev}+\ii k_x q_-
	\end{pmatrix}\,,
\end{equation}
with 
\begin{equation}\label{eq:qpm}
	q_\pm = \frac{f-\nu X_\pm}{\omega}
\end{equation}
The solutions \eqref{eq:solutions_xpm} are to be seen in relation with \eqref{eq:plane_wave_solutions}. They contribute boundary values
\begin{equation}\label{eq:boundary_values}
	\Psi_\mathrm{in}=\begin{pmatrix}
		\widehat{\psi}_\mathrm{in}\\
		\ii \kappa_\mathrm{in}\widehat{\psi}_\mathrm{in}
	\end{pmatrix}\,,\quad \Psi_\mathrm{out}=\begin{pmatrix}
		\widehat{\psi}_\mathrm{out}\\
		\ii \kappa_\mathrm{out}\widehat{\psi}_\mathrm{out}
	\end{pmatrix}\,,\quad \Psi_\mathrm{ev}=\begin{pmatrix}
		\widehat{\psi}_\mathrm{ev}\\
		\ii \kappa_\mathrm{ev} \widehat{\psi}_\mathrm{ev}
	\end{pmatrix}\,.
\end{equation}
We shall assume that there are no embedded eigenvalues. We conjecture this to be true for any self-adjoint boundary condition, and we show it for~\eqref{eq:boundary_condition}, which is of relevance for the rest of this paper. We do so at the end of this section.
\begin{lem}\label{lem:linear_indep}
	For $k_y\neq 0$ (cf. $\omega>\sqrt{k_x^2+(f-\nu k_x^2)^2}$ above) the three vectors in~\eqref{eq:boundary_values} are linearly independent. 
\end{lem}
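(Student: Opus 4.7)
Suppose $c_{\mathrm{in}}\Psi_{\mathrm{in}}+c_{\mathrm{out}}\Psi_{\mathrm{out}}+c_{\mathrm{ev}}\Psi_{\mathrm{ev}}=0$. The plan is a standard Wronskian-type argument: split this equality in $\mathbb{C}^6$ into its upper and lower $\mathbb{C}^3$ blocks,
$$
c_{\mathrm{in}}\widehat{\psi}_{\mathrm{in}}+c_{\mathrm{out}}\widehat{\psi}_{\mathrm{out}}+c_{\mathrm{ev}}\widehat{\psi}_{\mathrm{ev}}=0 \quad (*),\qquad
c_{\mathrm{in}}\kappa_{\mathrm{in}}\widehat{\psi}_{\mathrm{in}}+c_{\mathrm{out}}\kappa_{\mathrm{out}}\widehat{\psi}_{\mathrm{out}}+c_{\mathrm{ev}}\kappa_{\mathrm{ev}}\widehat{\psi}_{\mathrm{ev}}=0\quad (**),
$$
and eliminate terms by forming the linear combinations $(**)-\kappa_\alpha\cdot(*)$ for the three choices $\alpha\in\{\mathrm{in},\mathrm{out},\mathrm{ev}\}$. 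The coefficients that arise are differences $\kappa_\beta-\kappa_\alpha$; I note at the outset that these are all nonzero, since $k_y\neq 0$ forces $\kappa_{\mathrm{in}}=-\kappa_{\mathrm{out}}$ to be real and nonzero, while $\kappa_{\mathrm{ev}}$ is purely imaginary and nonzero.

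The first key step is therefore to exploit the distinctness of the $\kappa_\alpha$'s. For instance, $(**)-\kappa_{\mathrm{ev}}\cdot(*)$ yields
$$
c_{\mathrm{in}}(\kappa_{\mathrm{in}}-\kappa_{\mathrm{ev}})\widehat{\psi}_{\mathrm{in}}+c_{\mathrm{out}}(\kappa_{\mathrm{out}}-\kappa_{\mathrm{ev}})\widehat{\psi}_{\mathrm{out}}=0.
$$
Provided $\widehat{\psi}_{\mathrm{in}}$ and $\widehat{\psi}_{\mathrm{out}}$ are linearly independent in $\mathbb{C}^3$, this immediately forces $c_{\mathrm{in}}=c_{\mathrm{out}}=0$, and then $(*)$ together with $\widehat{\psi}_{\mathrm{ev}}\neq 0$ (its first component equals $X_-/\omega\neq 0$) gives $c_{\mathrm{ev}}=0$.

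The second key step is thus the pairwise independence of $\widehat{\psi}_{\mathrm{in}}$ and $\widehat{\psi}_{\mathrm{out}}$, which I verify directly from~\eqref{eq:solutions_xpm}. Their first components coincide and equal $X_+/\omega\neq 0$, so any proportionality would have to be with factor $1$; but their third components are $\kappa_{\mathrm{in}}+\ii k_xq_+$ and $\kappa_{\mathrm{out}}+\ii k_xq_+$, whose real parts differ by $\kappa_{\mathrm{in}}-\kappa_{\mathrm{out}}=-2\kappa_{\mathrm{out}}\neq 0$. Hence they cannot be proportional, independently of whether the degenerate value $q_+=0$ (i.e.\ $\omega=\sqrt{f/\nu}$) happens to be attained.

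The only delicate point is ensuring that the threshold assumption $\omega>\sqrt{k_x^2+(f-\nu k_x^2)^2}$ really makes all of $\kappa_{\mathrm{in}}$, $\kappa_{\mathrm{out}}$, $\kappa_{\mathrm{ev}}$ distinct and nonzero, which is exactly the condition $k_y\neq 0$ stated in the lemma together with the already-established fact $\nu^2X_+X_-=f^2-\omega^2<0$ that puts $X_+$ and $X_-$ on opposite sides of zero. Both checks are immediate, so there is no substantive obstacle.
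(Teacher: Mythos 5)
Your proof is correct, but it takes a genuinely different route from the paper's. The paper argues at the level of the differential equation: it observes that the boundary data $\Psi=(\psi,\psi')$ (subject to the constraint $k_xu-\ii v'=\omega\eta$, which all three vectors satisfy) determine the solution of $H\psi=\omega\psi$ uniquely, so a vanishing linear combination of the $\Psi_\alpha$ forces the function $\sum_\alpha f_\alpha\widehat\psi_\alpha e^{\ii\kappa_\alpha y}$ to vanish identically in $y$; linear independence of the exponentials $e^{\ii\kappa_\alpha y}$ for the three pairwise distinct $\kappa_\alpha$ then kills each coefficient separately. That argument needs only the distinctness of the $\kappa_\alpha$ and the non-vanishing of each $\widehat\psi_\alpha$, and it is the one that genuinely deserves the name "Wronskian-type". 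Your argument instead stays entirely in $\mathbb{C}^6$: the block elimination $(**)-\kappa_{\mathrm{ev}}\cdot(*)$ is clean, but note that distinctness of the $\kappa_\alpha$ alone would not suffice there (three stacked vectors $(v,\kappa_\alpha v)$ with a common $v\neq0$ and distinct $\kappa_\alpha$ are dependent), so you genuinely need the extra input that $\widehat\psi_{\mathrm{in}}$ and $\widehat\psi_{\mathrm{out}}$ are independent in $\mathbb{C}^3$ — which you correctly supply from the explicit formulas~\eqref{eq:solutions_xpm}, including the careful choice of the third component to avoid the degenerate case $q_+=0$. The trade-off is that your proof is more elementary and self-contained as linear algebra but leans on the explicit eigenvectors, whereas the paper's is structural and would survive any change in the concrete form of the $\widehat\psi_\alpha$.
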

\begin{proof}
	Inspection of $H\psi=\omega\psi$ as a differential equation in $y$, cf. \eqref{eq:ShallowWater_edge}, shows that any initial values
	\begin{equation}
		\Psi = \begin{pmatrix}
			\psi\\
			\psi '
		\end{pmatrix}=\begin{pmatrix}
			\eta\\
			\vdots\\
			v'
		\end{pmatrix}
	\end{equation}
	determine an existing and unique solution provided
	\begin{equation}
		k_x u -\ii v'=\omega\eta\,,
	\end{equation}
	which \eqref{eq:boundary_values} do. A linear combination 
	\begin{equation}
		f_\mathrm{in}\Psi_\mathrm{in} + f_\mathrm{out}\Psi_\mathrm{out}+f_\mathrm{ev}\Psi_\mathrm{ev}=0
	\end{equation}
	then implies
	\begin{equation}
		f_\mathrm{in}\hat{\psi}_\mathrm{in}e^{\ii \kappa_\mathrm{in} y}+f_\mathrm{out}\hat{\psi}_\mathrm{out}e^{\ii \kappa_\mathrm{out} y}+f_\mathrm{ev}\hat{\psi}_\mathrm{ev}e^{\ii \kappa_\mathrm{ev} y}=0\,.	
	\end{equation}		
	Linear independence of the functions $e^{\ii \kappa_\mathrm{in} y}$, $e^{\ii \kappa_\mathrm{out} y}$, $e^{\ii \kappa_\mathrm{ev} y}$ implies $f_\mathrm{in}=f_\mathrm{out}=f_\mathrm{ev}=0$, as was to be shown. 
\end{proof}
\begin{lem}\label{lem:vanishing_solution}
	Let $\psi$, $\tilde{\psi}$ be two solutions of $H\psi=\omega\psi$ that are bounded in $y\geq0$, but regardless of boundary conditions. If one of them vanishes at $y\rightarrow +\infty$, then
	\begin{equation}\label{eq:cond_infty}
		\tilde{\Psi}^*\Omega\Psi=0\,,
	\end{equation}	 
	where $\Psi$ is given by~\eqref{eq:qpm}.
\end{lem}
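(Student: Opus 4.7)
The plan is to promote the boundary-value identity from the proof of Lemma~\ref{lem:self_adj_subspaces} into a $y$-independent Wronskian, and then let $y\to\infty$. Write $\Psi_y:=(\psi(y),\psi'(y))^T$ and define $B(y):=\tilde{\Psi}_y^*\,\Omega\,\Psi_y$; the task is to show $B(0)=0$.

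First I would repeat, verbatim, the integration-by-parts computation that led to~\eqref{eq:boundary_cond_part}, but on a finite interval $[y_1,y_2]\subset[0,\infty)$ rather than on the whole half-line. The Sobolev-decay argument used there to discard the contribution at $+\infty$ now plays no role, so a boundary term survives at each endpoint and the outcome is
\begin{equation*}
-\ii\bigl(\braket{\tilde\psi}{H^\sharp\psi}_{[y_1,y_2]}-\braket{H^\sharp\tilde\psi}{\psi}_{[y_1,y_2]}\bigr)=B(y_1)-B(y_2).
\end{equation*}
Since $\psi$ and $\tilde\psi$ both satisfy $H^\sharp\phi=\omega\phi$ with $\omega\in\mathbb{R}$, the integrand on the left vanishes identically, so $B$ is constant in $y$.

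It then remains to evaluate the limit $y\to\infty$. Assume without loss of generality that $\tilde\psi(y)\to 0$. The eigenvalue equation~\eqref{eq:ShallowWater_edge} is a linear ODE with constant coefficients in $y$, so its solution space is finite-dimensional and exhausted by the characteristic modes $e^{\ii\kappa_\mathrm{in} y}\hat\psi_\mathrm{in}$, $e^{\ii\kappa_\mathrm{out} y}\hat\psi_\mathrm{out}$, $e^{\ii\kappa_\mathrm{ev} y}\hat\psi_\mathrm{ev}$, $e^{\ii\kappa_\mathrm{div} y}\hat\psi_\mathrm{div}$ of~\eqref{eq:solutions_xpm} (in the gap only the imaginary ones remain). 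Among the bounded modes, only the evanescent one tends to zero at $+\infty$, hence any bounded solution vanishing at infinity is proportional to $e^{\ii\kappa_\mathrm{ev} y}\hat\psi_\mathrm{ev}$ and decays exponentially together with its derivative. Combined with the boundedness of $\psi$ and $\psi'$, each of the six summands in~\eqref{eq:boundary_cond_part} then tends to $0$, so $B(y)\to 0$.

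Putting the steps together yields $B(0)=\lim_{y\to\infty}B(y)=0$, which is precisely~\eqref{eq:cond_infty}. The only point that requires some care is the last one: upgrading the pointwise vanishing of $\tilde\psi$ at infinity to the decay of the full vector $(\tilde\psi,\tilde\psi')$ in which $B$ is evaluated. This relies on the constant-coefficient and finite-dimensional structure of the eigenvalue ODE, which rules out slowly-decaying or oscillating tails and thereby makes the Wronskian genuinely vanish in the limit.
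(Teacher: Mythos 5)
Your argument is correct and is essentially the paper's own proof: the paper likewise obtains \eqref{eq:cond_infty} from the integration-by-parts identity \eqref{eq:boundary_cond_part}, noting that the left-hand side vanishes since both functions solve $H\psi=\omega\psi$ with real $\omega$, and that the boundary contribution at $y\to\infty$ vanishes because in each term one factor (the decaying solution and its derivative, controlled via the plane-wave mode decomposition) tends to zero while the other stays bounded. Your reformulation via the $y$-independence of the Wronskian $B(y)$ is just a slightly more explicit packaging of the same computation.
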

\begin{proof}
	The terms on the r.h.s of~\eqref{eq:boundary_cond_part} are supposed to be evaluated at $y=0$. The equation itself was obtained because the same terms would vanish for $y\to\infty$, and that is what they still do here, because in each of them one factor does while the other stays bounded. In fact, if a solution $\psi$ is bounded (or even vanishes at infinity), i.e. $\Im{k_y}\geq 0$ in~\eqref{eq:plane_wave_solutions}, then so does $\psi'$. Moreover, the l.h.s. of~\eqref{eq:boundary_cond_part} vanishes by $H\psi=\omega\psi$.
\end{proof}

Let a boundary condition $M$ be determined by a matrix $A$ as in (\ref{eq:subspaces_dim4},~\ref{eq:self_adjoint_matrix_cond}). It reads
\begin{equation}\label{eq:boundary_cond_matrix}
	A\Psi = 0\,,\quad \text{for}\quad \Psi=f_\mathrm{in}\Psi_\mathrm{in}+f_\mathrm{out}\Psi_\mathrm{out}+f_\mathrm{ev}\Psi_\mathrm{ev}\,.
\end{equation}
\begin{lem}\label{lem:appS_unique}
	There is a unique solution $(f_\mathrm{in},\,f_\mathrm{out},\,f_\mathrm{ev})$ up to multiples. 
\end{lem}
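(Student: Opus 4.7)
The plan is to translate the lemma into a linear-algebraic intersection statement in $\mathbb{C}^6$, prove existence by a dimension count, and rule out non-uniqueness by combining a flux (unitarity) identity with the standing hypothesis of no embedded eigenvalues. Writing $V=\operatorname{span}(\Psi_\mathrm{in},\Psi_\mathrm{out},\Psi_\mathrm{ev})$, which is three-dimensional by Lemma~\ref{lem:linear_indep}, and $M=\ker A$, which is four-dimensional by Proposition~\ref{prop:boundary_cond}, the claim is equivalent to $\dim(V\cap M)=1$.

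Existence is immediate: $\dim(V\cap M)\geq 3+4-6=1$. For uniqueness I argue by contradiction. Assume $\dim(V\cap M)\geq 2$; then the restriction of the coordinate functional $f_\mathrm{out}$ to $V\cap M$ has a non-trivial kernel, so there is a nonzero $\Psi=f_\mathrm{in}\Psi_\mathrm{in}+f_\mathrm{ev}\Psi_\mathrm{ev}\in V\cap M$.

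Since $\Psi\in M$, self-adjointness in the form $\Omega M=M^\perp$ (Lemma~\ref{lem:self_adj_subspaces}) gives $\Psi^*\Omega\Psi=0$. In the expansion every term containing a factor $\Psi_\mathrm{ev}$ vanishes by Lemma~\ref{lem:vanishing_solution}, since $\psi_\mathrm{ev}$ decays at infinity. What remains is $|f_\mathrm{in}|^2\,\Psi_\mathrm{in}^*\Omega\Psi_\mathrm{in}=0$. A direct computation from~\eqref{eq:solutions_xpm} and~\eqref{eq:Omega}, exploiting the antisymmetry of the off-diagonal block of $\Omega$, yields
\begin{equation*}
\Psi_\mathrm{in}^*\Omega\Psi_\mathrm{in}=\tfrac{2\kappa}{\omega}(k_x^2+\kappa^2)\bigl(1-2\nu f+2\nu^2(k_x^2+\kappa^2)\bigr),
\end{equation*}
which is strictly positive under the standing assumption $\nu<1/(4f)$. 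Hence $f_\mathrm{in}=0$, and consequently $f_\mathrm{ev}\neq 0$ (otherwise $\Psi=0$); but then $\psi_\mathrm{ev}$ alone satisfies the boundary condition and decays at infinity, i.e. provides an embedded eigenvalue at frequency $\omega$. This contradicts the no-embedded-eigenvalues hypothesis stated above the lemma and completes the argument.

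The main obstacle I expect is the final sign computation for $\Psi_\mathrm{in}^*\Omega\Psi_\mathrm{in}$: conceptually it is the unitarity statement $|f_\mathrm{in}|=|f_\mathrm{out}|$ for any boundary-compatible superposition, but establishing that the coefficient does not vanish relies critically on the regularization threshold $\nu<1/(4f)$ from Sect.~\ref{sec:bulk}. All other steps — the dimension count, the reduction via Lemma~\ref{lem:vanishing_solution}, and the closing invocation of the embedded-eigenvalue assumption — are essentially formal.
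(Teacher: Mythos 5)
Your proof is correct and follows essentially the same route as the paper: existence by the dimension count $3+4>6$, and uniqueness by applying $\Psi^*\Omega\Psi=0$ to a putative solution missing one propagating component, stripping the evanescent contributions via Lemma~\ref{lem:vanishing_solution}, and using the strict sign of the remaining self-flux term together with the no-embedded-eigenvalue assumption. The only (immaterial) difference is that you annihilate the $f_\mathrm{out}$ coordinate and evaluate $\Psi_\mathrm{in}^*\Omega\Psi_\mathrm{in}$, whereas the paper annihilates $f_\mathrm{in}$ and evaluates $\Psi_\mathrm{out}^*\Omega\Psi_\mathrm{out}$; the two are equal up to sign.
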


\begin{proof}
	There is at least one solution, since the span of~\eqref{eq:boundary_values}, which has dimension $3$ by Lemma~\ref{lem:linear_indep}, must intersect non-trivially $\ker A$ (of dimension $4$) in view of $3+4>6$. 
	On the other hand, there are no two linearly independent solutions. In fact, if so, there would be a solution of \eqref{eq:boundary_cond_matrix} with $f_\mathrm{in}=0$, i.e.
	\begin{equation}
		\Psi =f_\mathrm{out}\Psi_\mathrm{out}+f_\mathrm{ev}\Psi_\mathrm{ev}\,,
	\end{equation}
	which we shall rule out, unless trivial. Since $\Psi\in M$ we have $\Psi^*\Omega\Psi=0$ by~\eqref{eq:cond_omega}. Even though~\eqref{eq:cond_infty} does not allow to reach the same conclusion for $\Psi_{out}$, because $\psi_{out}$ is not vanishing at $y\to\infty$, it does for $f_{out}\Psi_{out}=\Psi-f_{ev}\Psi_{ev}$, because $\psi_{ev}$ does  and so any contribution to~\eqref{eq:cond_infty} involving it:
	\begin{equation}
		\abs{f_{out}}^2\cdot\Psi_{out}^*\Omega\Psi_{out}=0\,.
	\end{equation}
	We claim that the second factor does not vanish. In fact, a straightforward computation based on~\eqref{eq:boundary_cond_part} gives
	\begin{align}
		\Psi_{out}^*\Omega\Psi_{out} &= -\frac{2X_+}{\omega}\kappa_{out}+4\nu X_+\kappa_{out}q_+\nonumber\\
		&= -\frac{2X_+\kappa_{out}}{\omega}\left(1-2\nu(f-\nu X_+)\right)<0
	\end{align}
	because of $1-2\nu f>0$, $2\nu^2X_+>0$.
\end{proof}

\begin{prop}\label{prop:S_U1}
	For any self-adjoint boundary condition, $k_x\in \mathbb R$ and $\omega > \sqrt{k_x^2+(f-\nu k_x^2)^2}$ the scattering amplitude $S(k_x,\omega )\equiv S = f_\mathrm{out}/f_\mathrm{in}$ is well-defined and satisfies $|S|=1$.
\end{prop}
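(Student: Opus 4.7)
The plan is to derive $\abs{S}=1$ from the self-adjointness identity $\Psi^*\Omega\Psi=0$ for every $\Psi\in M$, applied to the unique scattering state $\Psi_s = f_{in}\Psi_{in} + f_{out}\Psi_{out} + f_{ev}\Psi_{ev}$ supplied by Lemma~\ref{lem:appS_unique}. Expanding $\Psi_s^*\Omega\Psi_s$ produces nine bilinears $\Psi_a^*\Omega\Psi_b$ with $a,b\in\{\mathrm{in},\mathrm{out},\mathrm{ev}\}$; the goal is to show that only the two diagonal bulk terms survive and that they are opposite in sign, thereby forcing $\abs{f_{in}}=\abs{f_{out}}$.

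For well-definedness I must rule out $f_{in}=0$. If it vanished, the identity derived below (once every other term is gone) would force $f_{out}=0$ as well, whereupon $\Psi_s = f_{ev}\Psi_{ev}$ with $f_{ev}\neq 0$ by Lemma~\ref{lem:appS_unique}. Such a $\psi_s$ is exponentially decaying, lies in $\mathcal{D}(H)$ and is an eigenfunction at $\omega$ embedded in the continuous spectrum, which contradicts the standing no-embedded-eigenvalue assumption stated just above Lemma~\ref{lem:linear_indep}.

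All six bilinears that involve $\Psi_{ev}$ vanish by Lemma~\ref{lem:vanishing_solution}, since $\psi_{ev}$ decays as $y\to\infty$ while its partners are bounded. The identity reduces to
\begin{equation*}
	\abs{f_{in}}^2\,\Psi_{in}^*\Omega\Psi_{in} + \abs{f_{out}}^2\,\Psi_{out}^*\Omega\Psi_{out} + 2\Re\bigl(\bar f_{in}f_{out}\,\Psi_{in}^*\Omega\Psi_{out}\bigr) = 0.
\end{equation*}
The two diagonal terms are essentially already computed inside the proof of Lemma~\ref{lem:appS_unique}; using $\kappa_{in}=-\kappa_{out}$ one reads off $\Psi_{in}^*\Omega\Psi_{in} = -\Psi_{out}^*\Omega\Psi_{out} > 0$, with positivity coming from $1-2\nu f>0$, guaranteed by $\nu<1/(4f)$.

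The one genuine obstacle is the cross term $\Psi_{in}^*\Omega\Psi_{out}$: since neither plane wave decays, Lemma~\ref{lem:vanishing_solution} does not apply. I plan two complementary routes. \emph{Conceptual route}: performing the integration by parts behind~\eqref{eq:boundary_cond_part} on $[0,L]$ instead of $[0,\infty)$ shows that $y\mapsto\tilde\Psi(y)^*\Omega\Psi(y)$ is constant in $y$ whenever $\tilde\psi,\psi$ both satisfy $H^\sharp\psi=\omega\psi$ with the same $\omega$; but for the two plane waves at hand this quantity carries an overall oscillating factor $e^{\ii(\kappa_{out}-\kappa_{in})y}$, so its amplitude must be zero. \emph{Direct route}: inserting~\eqref{eq:solutions_xpm} into~\eqref{eq:boundary_cond_part} one sees the first-derivative contributions collect the factor $\kappa_{in}+\kappa_{out}=0$ and drop out, while the remaining piece $-(\overline{\hat\eta}_{in}\hat v_{out} + \overline{\hat v}_{in}\hat\eta_{out})$ vanishes because $\hat\eta_{in}=\hat\eta_{out}\in\mathbb{R}$ and $\hat v_{out} + \overline{\hat v_{in}} = 0$ by inspection of~\eqref{eq:solutions_xpm}. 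With the cross term eliminated the identity collapses to $(\abs{f_{in}}^2 - \abs{f_{out}}^2)\Psi_{in}^*\Omega\Psi_{in} = 0$, and the positivity of $\Psi_{in}^*\Omega\Psi_{in}$ yields $\abs{S}=1$.
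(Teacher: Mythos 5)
Your proof is correct and follows essentially the same route as the paper's: expand $\Psi_s^*\Omega\Psi_s=0$ for the scattering state, kill the evanescent contributions via Lemma~\ref{lem:vanishing_solution}, observe $\Psi_{in}^*\Omega\Psi_{out}=0$ and $\Psi_{in}^*\Omega\Psi_{in}=-\Psi_{out}^*\Omega\Psi_{out}\neq0$, and conclude $\abs{f_{in}}=\abs{f_{out}}$. You merely make explicit two points the paper compresses into ``a straightforward computation'' and the standing no-embedded-eigenvalue assumption, namely the verification that the cross term vanishes (your constancy-in-$y$ argument and the direct check $\hat v_{out}+\overline{\hat v_{in}}=0$ are both sound) and the exclusion of $f_{in}=0$.
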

\begin{proof}
	A straightforward computation yields
	\begin{gather}
		\Psi_{out}^*\Omega\Psi_{out} = -\Psi_{in}^*\Omega\Psi_{in}\neq 0\,,\\
		\Psi_{out}^*\Omega\Psi_{in} =0 = \Psi_{in}^*\Omega\Psi_{out}\,,
	\end{gather}
	where the last equality follows by $\Omega^*=\Omega$. Using $\Psi^*\Omega\Psi=0$ for $\Psi$ as in~\eqref{eq:boundary_cond_matrix} yields $\abs{f_{out}}^2-\abs{f_{in}}^2=0$ by~\eqref{eq:cond_infty}.
\end{proof}

\begin{rem}
	The boundary condition~\eqref{eq:boundary_condition}, and in particular $v=0$, does not allow for embedded eigenvalues. In fact, in view of~\eqref{eq:solutions_xpm}, that would amount to 
	\begin{equation}
		\kappa_{ev}+\ii k_x q_-=0\,,
	\end{equation}
	and thus to 
	\begin{align}
		X_- &= k_x^2+\kappa_{ev}^2=k_x^2(1-q_-^2)\nonumber\\
		&= \frac{k_x^2}{\omega^2}\left(\omega^2-(f-\nu X_-)^2\right)=\frac{k_x^2}{\omega^2}X_-
	\end{align}
	or equivalently $(\omega^2-k_x^2)X_-=0$. Since both factors are known to be non-zero this is impossible.
\end{rem}

\section{Dirichlet boundary condition\label{app:Dirichlet}}

The computation of the scattering amplitude is similar to the one in Sect.~\ref{sec:BSC_proof}, but with \eqref{eq:boundary_condition} replaced by \eqref{eq:Dirichlet_BC} we end up with the simpler form 
\begin{equation}
S_\zeta(k_x,\kappa) = - \dfrac{g_\zeta(k_x,-\kappa)}{g_\zeta(k_x,\kappa)},\qquad g_\zeta(k_x,\kappa) = \begin{vmatrix}
 u_\zeta(k_x,\kappa)  & u_\infty(k_x,\kappa_\mathrm{ev})\\
v_\zeta(k_x,\kappa) & v_\infty(k_x,\kappa_\mathrm{ev})
\end{vmatrix}.
\end{equation}
In particular, for $\zeta =0$ one infers $g_0(k_x,\kappa) \to 2\ii$ as $(k_x,\kappa) \to \infty$, so that $S_0 \to -1$. Hence the scattering amplitude has no zero or pole in the neighborhood of $\infty$ and does not wind either. This is the regular situation where Levinson's theorem applies. Apart from \eqref{eq:Levinson_violation}, the rest of Thm.~\ref{thm:main} applies indeed similarly for Dirichlet boundary condition. This implies $C_+=n_\mathrm{b}$ so that the bulk-edge correspondence is satisfied.

\end{document}